\definecolor{Blue}{rgb}{0.0,0.0,1}
\definecolor{Purple}{rgb}{0.4,0,0.6}
\newcommand{\annot}[1]{{\color{Blue}#1}}
\newcommand{\auxcode}[1]{{\color{Purple}#1}}
\newcommand{\llbrace}{\{\hspace{-3pt}[}
\newcommand{\rrbrace}{]\hspace{-3pt}\}}
\newcommand{\po}{\mathsf{po}}
\newcommand{\mo}{\mathsf{mo}}
\newcommand{\rf}{\mathsf{rf}}
\newcommand{\hb}{\mathsf{hb}}
\newcommand{\rpo}{\mathsf{rpo}}
\newcommand{\porf}{\mathsf{porf}}
\colorlet{colorPO}{gray!60!black}
\colorlet{colorCF}{red!60!black}
\colorlet{colorECF}{red!60!black}
\colorlet{colorJF}{blue!60!black}
\colorlet{colorRF}{green!60!black}
\colorlet{colorEW}{brown}
\colorlet{colorMO}{orange}
\colorlet{colorFR}{purple}
\colorlet{colorECO}{red!80!black}
\colorlet{colorSYN}{green!40!black}
\colorlet{colorHB}{blue}
\colorlet{colorPPO}{magenta}
\colorlet{colorPB}{olive}
\colorlet{colorSBRF}{olive}
\colorlet{colorRMW}{olive!70!black}
\colorlet{colorRS}{blue}
\colorlet{colorRELEASE}{blue!70!black}
\colorlet{colorSC}{olive!40!black}
\colorlet{colorPSC}{olive!40!black}
\colorlet{colorREL}{olive}
\colorlet{colorCONFLICT}{olive}
\colorlet{colorRACE}{olive}
\colorlet{colorWB}{orange!70!black}
\colorlet{colorSCB}{violet}
\colorlet{colorDETOUR}{teal}
\colorlet{colorDEPS}{violet}
\colorlet{colorFENCE}{olive}
\colorlet{colorCOV}{magenta!20}
\colorlet{colorISS}{blue!10!white}
\colorlet{colorVF}{purple!70!black}
\tikzset{
   every path/.style={>=stealth},
   po/.style={->,color=colorPO,shorten >=-0.5mm,shorten <=-0.5mm},
   sw/.style={->,color=colorSYN,shorten >=-0.5mm,shorten <=-0.5mm},
   sc/.style={->,color=colorSC,dotted,thick,shorten >=-0.5mm,shorten <=-0.5mm},
   rf/.style={->,color=colorRF,dashed,,shorten >=-0.5mm,shorten <=-0.5mm},
   hb/.style={->,color=colorHB,thick,shorten >=-0.5mm,shorten <=-0.5mm},
   mo/.style={->,color=colorMO,dotted,very thick,shorten >=-0.5mm,shorten <=-0.5mm},
   co/.style={->,color=colorMO,dotted,thick,shorten >=-0.5mm,shorten <=-0.5mm},
   no/.style={->,dotted,thick,shorten >=-0.5mm,shorten <=-0.5mm},
   fr/.style={->,color=colorFR,dotted,thick,shorten >=-0.5mm,shorten <=-0.5mm},
   deps/.style={->,color=colorDEPS,dotted,thick,shorten >=-0.5mm,shorten <=-0.5mm},
   ppo/.style={->,color=colorPPO,shorten >=-0.5mm,shorten <=-0.5mm},
   rmw/.style={->,color=colorRMW,thick,shorten >=-0.5mm,shorten <=-0.5mm},
   detour/.style={->,color=colorDETOUR,shorten >=-0.5mm,shorten <=-0.5mm},
   cf/.style={-,snake=zigzag,segment amplitude=1pt,segment length=3pt,colorCF},
   ew/.style={<->,dashed,,shorten >=-0.5mm,shorten <=-0.5mm,color=colorEW},
   jf/.style={->,color=colorJF,dotted,thick,shorten >=-0.5mm,shorten <=-0.5mm},
   vf/.style={->,color=colorVF,dashed,shorten >=-0.5mm,shorten <=-0.5mm},
}
\newcommand{\Init}{\mathsf{Init}}
\newcommand{\lR}{{\mathtt{R}}}
\newcommand{\lW}{{\mathtt{W}}}
\newcommand{\lPO}{{\color{colorPO}\mathtt{po}}}
\newcommand{\lRF}{{\color{colorRF} \mathtt{rf}}}
\newtheorem{definition}{Definition}
\newtheorem{lemma}{Lemma}
\newtheorem{theorem}{Theorem}
\newtheorem{example}{Example}
\title{An approach for modularly verifying the core of Rust's atomic reference counting algorithm against the (Y)C20 memory consistency model}
\author{Bart Jacobs
}
\author{Justus Fasse
}
\affil{KU Leuven, Department of Computer Science, DistriNet Research Group, Leuven, Belgium}
\keywords{Relaxed Memory Consistency, Separation Logic, Modular Verification, Rust, Atomic Reference Counting.}
\runningtitle{Verifying Core ARC against (Y)C20}
\runningauthor{Jacobs and Fasse}
\begin{abstract}

We propose an approach for modular verification of programs that use relaxed-consistency atomic
memory access primitives and fences. The approach is sufficient for verifying the core of Rust's Atomic
Reference Counting (ARC) algorithm. We first argue its soundness, when combined with a simple static analysis and admitting an open sub-problem,
with respect to
the C20 memory consistency model. We then argue its soundness, even in the absence of any static analysis and without any assumptions, with respect to YC20, a minor strengthening of XC20, itself a
recently proposed minor strengthening
of C20 that rules out out-of-thin-air behaviors but
allows load buffering. In contrast to existing work on verifying
ARC, we do not assume acyclicity of the union of the program-order and reads-from relations.
We define an interleaving operational semantics, prove
its soundness with respect to (Y)C20's axiomatic semantics, and then apply any existing program logic for
fine-grained interleaving concurrency, such as Iris.
\end{abstract}
\begin{document}
\maketitle
\urlstyle{rm}

\section{Introduction}

Most work on modular verification of shared-memory multithreaded programs so far (e.g.~\cite{iris,iris-groundup}) has assumed \emph{sequential consistency}, i.e.~that in each execution of the program, there exists some total order on the memory accesses such that each read access yields the value written by the most recent preceding write access in this total order. However, for performance reasons, many real-world concurrent algorithms, such as Rust's Atomic Reference Counting (ARC) algorithm\footnote{\url{https://doc.rust-lang.org/std/sync/struct.Arc.html}}, use \emph{relaxed-consistency} memory accesses that do not respect such a total order.

\begin{figure*}
\begin{subfigure}[b]{.80\textwidth}
$$\begin{array}{c | c | c}
& &\\
\begin{array}{c}
X = Y = 0\\
\begin{array}{l || l}
a = X; & \,b = Y;\\
\mathbf{if}(a == 1)\, & \,\mathbf{if}(b == 1)\\
\quad Y = 1; & \quad X = 1;
\end{array}\\
a = b = 1\\
(\textrm{LBD})
\end{array}
& 
\begin{array}{c}
X = Y = 0\\
\begin{array}{l || l}
a = X; & \,b = Y;\\
Y = 1; & \,X = 1;
\end{array}\\
a = b = 1\\
\\
(\textrm{LB})
\end{array}
& 
\begin{array}{c}
X = Y = 0\\
\begin{array}{l || l}
a = X; & \,b = Y;\\
\mathbf{fence}_\mathsf{acq};\, & \,\mathbf{fence}_\mathsf{acq};\\
Y = 1; & \,X = 1;
\end{array}\\
a = b = 1\\
(\textrm{LBf})
\end{array}\\
& &
\end{array}$$
\caption{Three memory consistency litmus tests}\label{fig:litmus-tests}
\end{subfigure}\hfill
\begin{subfigure}[b]{.16\textwidth}\centering
\begin{tikzpicture}[yscale=0.7,xscale=1]

  \node (init) at (1,  1.5) {$\Init$};

  \node (i11) at ( 0,  0) {$\lR(X, 1)$};
  \node (i12) at ( 0, -2) {$\lW(Y, 1)$};

  \node (i21) at ( 2,  0) {$\lR(Y, 1)$};
  \node (i22) at ( 2, -2) {$\lW(X, 1)$};

  \draw[po] (i11) edge node[left] {\small$\lPO$} (i12);
  \draw[po] (i21) edge node[left] {\small$\lPO$} (i22);

  \draw[rf] (i22) edge node[below] {\small$\lRF$} (i11);
  \draw[rf] (i12) edge node[below] {} (i21);

  \draw[po] (init) edge node[left] {\small$\lPO$} (i11);
  \draw[po] (init) edge node[right] {\small$\lPO$} (i21);

\end{tikzpicture}
\caption{A C20 execution graph}\label{fig:graph}
\end{subfigure}
\caption{Three memory consistency litmus tests and a C20 execution graph explaining the LB/LBD behavior}
\end{figure*}

Consider the three \emph{litmus tests} LBD, LB, and LBf in Fig.~\ref{fig:litmus-tests}, small concurrent programs accompanied by a precondition $X = Y = 0$ and a postcondition $a = b = 1$. We say the behavior specified by a litmus test is \emph{observable} if the program has an execution that satisfies the pre- and postcondition. When implemented using relaxed accesses, the LB (\emph{load buffering}) behavior is observable, because of instruction reorderings performed by the compiler and/or the processor. Since acquire fences have no effect in the absence of release operations, so is the LBf behavior. The LBD behavior, however, known as an \emph{out-of-thin-air (OOTA)} behavior, is not observable on any real execution platform. Recent editions of the C standard attempt to precisely describe the concurrency behaviors that a C implementation is allowed to exhibit, by listing a number of \emph{axioms} that each execution must satisfy. We refer to the formalization thereof used in \cite{xmm} as C20. These axioms allow the LB and LBf behaviors; unfortunately, since the C20 concept of an execution does not consider dependencies, it cannot distinguish LBD from LB and allows it too. Strengthening C20 to rule out OOTA behaviors like LBD while still allowing all desirable optimizations, such as the reorderings exhibited by LB, has been a long-standing open problem. In the meantime, most work on modular verification of relaxed-consistency programs \cite{rsl,fsl,fslpp,relaxed-rustbelt}, has assumed the absence of cycles in the union of program order ($\po$, the total order on a thread's memory accesses induced by the program's control flow) and the reads-from ($\rf$) relation relating each write event to the read events that read from it.
Without such cycles, it is possible to prove the soundness of program logics by induction on the size of $(\po\cup\rf)^+$ prefixes of the execution.
In addition to ruling out LBD, however, this acyclicity assumption also rules out LB and LBf.

A breakthrough was achieved, however, with the very recent proposal of XMM \cite{xmm}, a framework for concurrency semantics based on \emph{re-execution}. Given an underlying memory model, XMM slightly restricts it to rule out OOTA behaviors, without ruling out $\porf$ cycles altogether. Specifically, XMM incrementally builds executions through Execute steps and Re-Execute steps. An Execute step adds an event that reads from an existing event. Thus, it never introduces a $\porf$ cycle. A Re-Execute step fixes an $\rf^{-1}$-closed subset of the events of the original execution, called the \emph{committed set}, and then re-builds an execution from scratch, where read events may temporarily ``read from nowhere'' \emph{but only if they are in the committed set}. Re-Execute steps enable the construction of $\porf$ cycles, but the values read are \emph{grounded} by an earlier execution. Applying XMM to C20 yields XC20, which allows LB (and LBf) but rules out LBD.

In this paper, we propose an approach for modularly verifying relaxed-consistency programs that is sound in the presence of $\porf$ cycles. We use the core of ARC as a motivating example; its code and intended separation logic specification are shown in Fig.~\ref{fig:arc}.
Function $\mathsf{alloc}(v)$ allocates an ARC instance whose payload is $v$. Every owner of a permission to access the ARC instance at address $a$ holding payload $v$ (denoted $\mathsf{arc}(a, v)$) can read the payload using function $\mathsf{get}(a)$, implemented using a simple nonatomic read. Function $\mathsf{clone}(a)$ duplicates the permission to access $a$, and $\mathsf{drop}(a)$ destroys it. When the last permission is destroyed, the instance is deallocated. The first field of the ARC data structure is a counter, initialized to 1, incremented at each $\mathsf{clone}$ using a relaxed-consistency ($\mathbf{rlx}$) fetch-and-add (FAA) instruction, and decremented at each $\mathsf{drop}$.
Since the decrements have release ($\mathbf{rel}$) consistency, they all synchronize with the acquire ($\mathbf{acq}$) fence in the thread that reads counter value 1, ensuring that the deallocation does not race with any of the accesses.
\begin{figure}
$$\begin{array}{l l}
\begin{array}{l}
\mathsf{alloc}(v) =\\
\quad \annot{\{\mathsf{True}\}}\\
\quad \mathbf{cons}(1, v)\\
\quad \annot{\{r.\;\mathsf{arc}(r, v)\}}\\
\\
\mathsf{get}(a) =\\
\quad \annot{\{\mathsf{arc}(a, v)\}}\\
\quad [a + 1]_\mathbf{na}\\
\quad \annot{\{r.\;\mathsf{arc}(a, v) \land r = v\}}\\
\\
\mathsf{clone}(a) =\\
\quad \annot{\{\mathsf{arc}(a, v)\}}\\
\quad \mathbf{FAA}_\mathbf{rlx}(a, 1)\\
\quad \annot{\{\mathsf{arc}(a, v) * \mathsf{arc}(a, v)\}}\\
\end{array}
&
\begin{array}{l}
\mathsf{drop}(a) =\\
\quad \annot{\{\mathsf{arc}(a, v)\}}\\
\quad \mathbf{let}\ n = \mathbf{FAA}_\mathbf{rel}(a, -1)\ \mathbf{in}\\
\quad \mathbf{if}\ n = 1\ \mathbf{then}\ (\\
\quad\quad \mathbf{fence}_\mathbf{acq};\\
\quad\quad \mathbf{free}(a); \mathbf{free}(a + 1)\\
\quad )\\
\quad \annot{\{\mathsf{True}\}}\\
\end{array}
\end{array}$$
\caption{Core ARC, with desired specs}\label{fig:arc}
\end{figure}

As we will show, this algorithm is correct under C20 \emph{provided that the following two properties hold}:
\begin{eqnarray}
\begin{array}{l r}
\multicolumn{2}{l}{\textit{All accesses of the counter}\quad}\\
\multicolumn{2}{r}{\quad\textit{are either increments or release decrements.}}
\end{array}\label{eq:arc-accesses}\\
\textit{No access of the counter reads a value $v \le 0$.}\label{eq:arc-no-zero}
\end{eqnarray}
In particular, clearly, the presence of relaxed decrements would break the algorithm. Verifying (\ref{eq:arc-accesses}) or (\ref{eq:arc-no-zero}) using separation logic is tricky, and we don't know how to do it. Indeed, consider LBf above. Suppose that we want to verify that all modifications of $X$ and $Y$ are release writes, so that they synchronize with the acquire fences. The problem here is that both writes are \emph{behind} fences. To reason about the fences, we would have to assume the property before verifying it, which is unsound.
Instead, we simply assume these properties. Actually, (\ref{eq:arc-accesses}) can be verified using a \emph{syntactic} approach. By using a Java-like static type system to protect the $\mathsf{Arc.counter}$ field it is then possible to syntactically check all accesses to it. For (\ref{eq:arc-no-zero}), in contrast, we have no suggestions; we leave it as an open problem.

In \S\ref{sec:c20}, we briefly recall the C20 memory consistency model. In \S\ref{sec:c20-logic}, we propose an approach for verifying programs under C20, under certain assumptions to be checked by other means, and we use it to verify Core ARC. Specifically, we propose an operational semantics (opsem) for C20 programs instrumented with an \emph{atomic specification} for each location accessed atomically, that specifies a set of \emph{enabled operations} for the location, as well as a \emph{tied precondition} for each enabled operation and a \emph{tied postcondition} for each enabled operation-result pair, both elements of an algebra of \emph{tied resources}. We assume all access events that occur are enabled. In the opsem, an atomic access nondeterministically yields any result enabled under its atomic specification. We argue the opsem's soundness with respect to C20's axiomatic semantics, and we apply the Iris \cite{iris,iris-groundup} logic to the opsem to verify Core ARC.

In \S\ref{sec:yc20-logic}, then, we adapt this approach to obtain an approach for verifying programs under a minor strengthening of XC20, which we call YC20.
Without any assumptions, we verify Core ARC against YC20.
We use the grounding guarantees offered by YC20 to prove that all atomic accesses of a location are enabled under its atomic specification. But first we recall the XC20 memory consistency model in \S\ref{sec:xc20} and define YC20. We finish by discussing related work (\S\ref{sec:related-work}) and offering a conclusion (\S\ref{sec:conclusion}).

\section{The C20 memory consistency model}\label{sec:c20}

We here briefly recall the C20 memory consistency model. For a gentler presentation, see \citet{batty-phd}.

In C20, the memory operations $o \in \mathcal{O} = \{\mathsf{R}_\mathsf{na},$ $\mathsf{R}_\mathsf{rlx},$ $\mathsf{R}_\mathsf{acq},$ $\mathsf{fence}_\mathsf{acq},$ $\mathsf{fence}_\mathsf{rel}\} \cup \{\mathsf{W}_\mathsf{na}(v),$ $\mathsf{W}_\mathsf{rlx}(v),$ $\mathsf{W}_\mathsf{rel}(v)$ $|\ v \in \mathbb{Z}\} \cup \{\mathsf{RMW}_\mathsf{rlx}(f),$ $\mathsf{RMW}_\mathsf{rel}(f),$ $\mathsf{RMW}_\mathsf{acq}(f),$ $\mathsf{RMW}_\mathsf{acqrel}(f)$ $|\ f \in \mathbb{Z} \rightarrow \mathbb{Z}\}$ are the nonatomic, relaxed, and acquire reads, acquire and release fences, nonatomic, relaxed and release writes of a value $v$, and relaxed, release, acquire, and acquire-release read-modify-write (RMW) operations that atomically read a value $v$ from a location and write value $f(v)$, for some function $f \in \mathbb{Z} \rightarrow \mathbb{Z}$. Each memory event is labeled by a tuple $(t, \ell, v, o)$ specifying the event's thread $t \in \mathit{ThreadIds}$, location $\ell \in \mathbb{Z}$, result value $v \in \mathbb{Z}$, and operation $o \in \mathcal{O}$.

The result value of a write is always 0. The result value of an RMW is the value that is read, before the modification is performed. It follows that an event labelled by operation $\mathsf{RMW}(f)$ and result value $v$ writes value $f(v)$.

There is a special \emph{initialization thread} $t_\mathsf{init} \in \mathit{ThreadIds}$ that performs a nonatomic write of value 0 to each memory location used by the program.

In C20, the set of behaviors of a program is given by its set of consistent C20 execution graphs. A C20 execution graph is a tuple $(E, \mathsf{lab}, \mathsf{po}, \mathsf{rf}, \mathsf{mo}, \mathsf{rmw})$, where $E$ is a set of events, $\mathsf{lab}$ is a function that maps each event $e \in E$ to its label, and the \emph{program order} $\mathsf{po}$, \emph{reads-from} relation $\mathsf{rf}$, \emph{modification order} $\mathsf{mo}$, and \emph{read-modify-write} relation $\mathsf{rmw}$ are subsets of $E \times E$. A C20 execution graph is \emph{well-formed} if all of the following conditions hold:
\begin{itemize}
\item Program order relates the initialization event for each location $\ell$ to each other access of that location, and otherwise relates two events only if they belong to the same thread. For each non-initialization thread $t$, program order totally orders the events of $t$.
\item The reads-from relation only relates write or RMW events to read or RMW events. It only relates events writing a value $v$ to location $\ell$ to events reading value $v$ from location $\ell$. It relates at most one write or RMW event to any given read or RMW event. If it relates events $e_1$ and $e_2$ by the same thread, $e_1$ precedes $e_2$ in program order.
\item For each location $\ell$, modification order totally orders the write and RMW events on $\ell$, and only relates writes or RMW events, and only relates events on the same location.
\item $\mathsf{rmw}$ only relates read events to write events. It only relates events to their immediate program order successor.
\end{itemize}

Unless otherwise noted, we only consider well-formed execution graphs.

We say a read or RMW $r$ reads from a write or RMW $w$ if $(w, r) \in \mathsf{rf}$. We say an execution graph is $\mathsf{rf}$-complete if each read and each RMW reads from some write or RMW.

This definition allows for two ways to represent RMWs: as singular events labelled by an RMW operation, and as pairs of a read and write event related by $\mathsf{rmw}$. We say an execution is \emph{high-level} if its $\mathsf{rmw}$ relation is empty, and \emph{low-level} if it has no events labelled by RMW operations. For each high-level execution, there is exactly one corresponding low-level execution (up to graph isomorphism), obtained by replacing each RMW event by the corresponding pair of read and write events and $\mathsf{rmw}$ edge. In the other direction, the correspondence is not unique since a pair of read and write events does not uniquely determine an RMW operation's update function $f$.

To define consistency of a C20 execution graph, we first need to define the derived relations $\mathsf{sw}$ (\emph{synchronizes-with}), $\mathsf{hb}$ (\emph{happens-before}), $\mathsf{fr}$ (\emph{from-reads}), and $\mathsf{eco}$ (\emph{extended coherence order}):
\begin{itemize}
\item $\mathsf{sw}$ relates a release write or RMW $w$ or a release fence succeeded in program order by a relaxed write or RMW $w$ to an acquire read or RMW $r$ or an acquire fence preceded in program order by a relaxed read or RMW $r$ if $r$ reads from $w$ or there exists a \emph{release sequence} from some RMW that reads from $w$ to some RMW that $r$ reads from. A release sequence is a sequence of RMWs where each next one reads from the preceding one.
\item $\mathsf{hb}$ is the transitive closure of the union of $\mathsf{po}$ and $\mathsf{sw}$.
\item $\mathsf{fr}$ relates each read or RMW event $r$ to each modification order successor of the write or RMW that $r$ reads from (if any).
\item $\mathsf{eco}$ is the transitive closure of the union of $\mathsf{rf}$, $\mathsf{mo}$, and $\mathsf{fr}$.
\end{itemize}

We say a C20 execution graph is \emph{consistent} if it is $\mathsf{rf}$-complete and both of the following properties hold:
\begin{itemize}
\item It respects \emph{coherence}, meaning that $\mathsf{hb}$ is irreflexive and, furthermore, if an event $e$ happens-before an event $e'$, $e'$ is not related to $e$ by $\mathsf{eco}$ (i.e.~$\mathsf{eco}$ is consistent with $\mathsf{hb}$).
\item It respects \emph{atomicity of RMWs}, meaning that there is no event $e$ such that $\mathsf{fr}$ relates the read event of an RMW to $e$ and $\mathsf{mo}$ relates $e$ to the write event of the same RMW.
\end{itemize}

In C20, the set of executions of a program is given by the set of consistent C20 execution graphs generated by the program. The execution graph in Fig.~\ref{fig:graph} is an execution of the LB and LBD programs. It has a $\mathsf{porf}$ cycle.

We say a C20 execution graph has a data race if it contains two events on the same location, at least one of which is a write or RMW, and at least one of which is nonatomic. In C, if any of a program's executions has a data race, the program is considered to have undefined behavior. The goal of our verification approach is to verify absence of data races.

\section{A verification approach for C20}\label{sec:c20-logic}

In this section, we propose an operational semantics for C20 (\S\ref{sec:c20-opsem}) and we apply it to verify Core ARC (\S\ref{sec:arc-c20}).

\subsection{An operational semantics for C20}\label{sec:c20-opsem}

We define an interleaving operational semantics for an instrumented version of our programming language.

\subsubsection{Instrumented programs}

We add two auxiliary commands to the syntax of the programming language: $\mathbf{begin\_atomic}(\ell, \Sigma)$ and
$\mathbf{end\_atomic}(\ell)$.
When turning a nonatomic memory location $\ell$ into an atomic location using the $\mathbf{begin\_atomic}$ command,
one has to specify an \emph{atomic specification} $\Sigma = (v_0, \mathcal{R}_\mathsf{G}, \mathcal{R}_\mathsf{L}, \rho_0, \mathsf{pre}, \mathsf{post})$ consisting of an initial value $v_0$, 
cancellative commutative monoids\footnote{A monoid is an algebra $(R, \cdot, \varepsilon)$ given by a set $R$ with an associative binary \emph{composition} operator $\cdot$ and a unit element $\varepsilon$. It is \emph{cancellative} if $v \cdot v_0 = v' \cdot v_0 \Rightarrow v = v'$.} of global tied resources $\mathcal{R}_\mathsf{G}$ (ranged over by $\rho$) and local tied resources $\mathcal{R}_\mathsf{L}$ (ranged over by $\theta$), an initial global tied resource $\rho_0 \in \mathcal{R}_\mathsf{G}$, a partial function $\mathsf{pre} : \mathcal{O} \rightharpoonup \mathcal{R}_\mathsf{G} \times \mathcal{R}_\mathsf{L}$, mapping each \emph{enabled operation} $o$ to its \emph{tied precondition} $(\rho, \theta)$ consisting of a \emph{global tied precondition} $\rho$ and a \emph{local tied precondition} $\theta$, and a partial function $\mathsf{post} : \mathcal{O} \times \mathbb{Z} \rightharpoonup \mathcal{R}_\mathsf{G} \times \mathcal{R}_\mathsf{L}$ mapping a pair of an enabled operation $o$ and a result value $v$ at which $o$ is \emph{enabled} to a \emph{tied postcondition} $(\rho', \theta')$ consisting of a \emph{global tied postcondition} $\rho'$ and a \emph{local tied postcondition} $\theta'$.\footnote{We consider only \emph{valid} atomic specifications, where no nonatomic operations or accesses are enabled.}

\begin{example}\label{ex:atomic-spec}
For the Core ARC proof, we will use the following atomic specification: $v_0 = 1$, $\mathcal{R}_\mathsf{G} = \mathcal{R}_\mathsf{L} = (\mathbb{N}, +, 0)$, $\rho_0 = 1$,
$\mathsf{pre} = \{(\mathsf{FAA}_\mathsf{rlx}(1), (1, 0)),$ $(\mathsf{FAA}_\mathsf{rel}(-1), (1, 0)), (\mathsf{fence}_\mathsf{acq}, (0, 1))\}$, and $$\begin{array}{l l l}
\mathsf{post} = & & \{((\mathsf{FAA}_\mathsf{rlx}(1), z), (2, 0))\ |\ z.\;1 \le z\}\\
& \cup\; & \{((\mathsf{FAA}_\mathsf{rel}(-1), z), (0, 0))\ |\ z.\;2 \le z\}\\
& \cup\; & \{((\mathsf{FAA}_\mathsf{rel}(-1), 1), (0, 1))\}\\
& \cup\; & \{((\mathsf{fence}_\mathsf{acq}, 0), (0, 1))\}
\end{array}$$
In words: we start out with one unit of global tied resource. The enabled operations are the relaxed increments, the release decrements, and the acquire fences. The enabled operation-result pairs are the relaxed increments that read a positive value, the release decrements that read a positive value, and the acquire fences (which always have result value 0). Incrementing and decrementing both consume one unit of global tied resource. Incrementing produces two units, and decrementing produces nothing, except if the result value (i.e.~the value that was read, before the decrement) is 1, in which case it produces one unit of \emph{local} tied resource.
Accesses are enabled only at result values (i.e.~values read, before the modification) $v \ge 1$. A fence consumes one unit of local tied resource and produces it again.

Here is a more readable Hoare triple-style representation:
\begin{mathpar}
\inferrule{1 \le v}{
\{(1, 0)\}\ \mathsf{FAA}_\mathsf{rlx}(1) \rightsquigarrow v\ \{(2, 0)\}
}
\and
\inferrule{2 \le v}{
\{(1, 0)\}\ \mathsf{FAA}_\mathsf{rel}(-1) \rightsquigarrow v\ \{(0, 0)\}
}
\and
\{(1, 0)\}\ \mathsf{FAA}_\mathsf{rel}(-1) \rightsquigarrow 1\ \{(0, 1)\}
\and
\{(0, 1)\}\ \mathsf{fence}_\mathsf{acq} \rightsquigarrow 0\ \{(0, 1)\}
\end{mathpar}
\end{example}

For now, we assume that for any $\mathbf{begin\_atomic}(\ell, \Sigma)$ event, all accesses $a$ of $\ell$ that do not happen before it happen after it and enabled under $\Sigma$, i.e.~$(\mathsf{lab}(a).o, \mathsf{lab}(a).v) \in \mathrm{dom}\,\mathsf{post}$.\footnote{This implies the program has no $\mathbf{end\_atomic}(\ell)$ operations, and also does not deallocate the location, suggesting a garbage-collected setting. Note: other resources, like the ARC's payload, might still be non-garbage-collected native resources.} In \S\ref{sec:yc20-logic}, we eliminate all of these assumptions.

The syntax of instrumented programs is as follows:
$$\begin{array}{r l}
e ::= & v\ |\ x\ |\ e + e\ |\ e = e\\
c ::= & \mathbf{cons}(\overline{e})\ |\ [e]_\mathbf{na}\ |\ [e] :=_\mathbf{na} e\ |\ \mathbf{free}(e)\\
& |\ \auxcode{\mathbf{begin\_atomic}(e, \Sigma)}\ |\ \auxcode{\mathbf{end\_atomic}(e)}\ |\ o(e)\\
& |\ \mathbf{if}\ e\ \mathbf{then}\ c\ |\ \mathbf{let}\ x = c\ \mathbf{in}\ c\ |\ \mathbf{fork}(c)
\end{array}$$
Command $o(e)$ applies memory access operation $o$ to the location yielded by expression $e$. Notice that even fence operations are qualified by a location this way. Indeed, our operational semantics takes into account only the synchronization induced by the interplay between the fence and the accesses of this particular location. It is future work to investigate the severity of this incompleteness and to lift it, if necessary.

\subsubsection{Consistency of a tied resource with an event}

We derive a cancellative commutative monoid of \emph{thread-bound tied resources} $\mathcal{R}_\mathsf{B} = \mathit{ThreadIds} \rightarrow \mathcal{R}_\mathsf{L}$, ranged over by $\Theta$. Each element of $\mathcal{R}_\mathsf{B}$ associates a local tied resource with each thread. Its composition is the pointwise composition and its unit element is the function that maps all thread ids to $\varepsilon$. We further derive a cancellative commutative monoid of \emph{total tied resources} $\mathcal{R}_\mathsf{T} = \mathcal{R}_\mathsf{G} \times \mathcal{R}_\mathsf{B}$ (ranged over by $\omega$), whose composition is the componentwise composition and whose unit element is $(\varepsilon, \varepsilon)$. We will abuse $\cdot$ and $\varepsilon$ to denote the composition and unit element of any of these monoids. Initially, there are no local tied resources, so the initial total tied resource is $(\rho_0, \varepsilon)$.

An \emph{atomic location trace} $(G, E_\mathsf{at}, \mathit{init})$ for a location $\ell$ under an atomic specification $\Sigma$ consists of a consistent C20 execution graph $G$, a set $E_\mathsf{at} \subseteq G.E$ of events on $\ell$, all of whose operations are enabled under $\Sigma$ (i.e.~$\mathsf{lab}(E_\mathsf{at}).o \subseteq \mathrm{dom}\,\Sigma.\mathsf{pre}$), and a nonatomic write $\mathit{init} \in G.E$ of $\Sigma.v_0$ to $\ell$ that happens-before all events in $E_\mathsf{at}$, such that each read or RMW event in $E_\mathsf{at}$ reads from some event in $E_\mathsf{at} \cup \{\mathit{init}\}$.

We say an event is \emph{enabled} under $\Sigma$ if the event is labeled by $(t, \ell, v, o)$ such that $o$ is enabled at result $v$: $(o, v) \in \mathrm{dom}\,\Sigma.\mathsf{post}$.

We say an atomic location trace is \emph{fully enabled} if all events in $E_\mathsf{at}$ are enabled.

We say a total tied resource $\omega$ is \emph{consistent} with an operation $o$ by a thread $t$ resulting in a value $v$ under an atomic specification $\Sigma$, denoted $\Sigma, t, v, o \vDash \omega$, if there exists a fully enabled atomic location trace under $\Sigma$ that includes an event $e \in E_\mathsf{at}$ labeled by an operation $o$ by thread $t$ resulting in $v$ and there exists a subset $E_\mathsf{ex} \subseteq E_\mathsf{at}$ that includes all events in $E_\mathsf{at}$ that happen before $e$ but does not include $e$ itself or any events that happen after $e$, such that for every total ordering of $E_\mathsf{ex}$ consistent with happens before, starting from $(\Sigma.\rho_0, \varepsilon)$ it is possible to, in this order, consume the tied precondition and produce the tied postcondition of each event $e' \in E_\mathsf{ex}$ and arrive at $\omega$. We say that the consistency is \emph{witnessed} by the atomic location trace, $e$, and $E_\mathsf{ex}$.

\subsubsection{Operational semantics}

We instrument the state space of the programming language as follows. The regular heap $h$ stores the values of nonatomic cells only; for
atomic cells the semantics does not track the value; it only tracks, in the \emph{atomic heap} $A$, each atomic location's atomic specification and total tied resource.

\begin{figure*}
\begin{mathpar}
\inferrule[BeginAtomic]{
(\ell, \Sigma.v_0) \in h
}{
(h, A, \mathbf{begin\_atomic}(\ell, \Sigma)) \stackrel{t}{\rightarrow}_\mathsf{h} (h[\ell:=\varnothing], A[\ell:=(\Sigma, (\Sigma.\rho_0, \varepsilon))], 0)
}
\and
\inferrule[EndAtomic]{
(\ell, (\Sigma, \omega)) \in A
}{
(h, A, \mathbf{end\_atomic}(\ell)) \stackrel{t}{\rightarrow}_\mathsf{h} (h[\ell:=v], A[\ell:=\bot], 0)
}
\and
\inferrule[AtomicOp]{
(\ell, (\Sigma, (\rho\cdot\rho', \Theta[t:=\theta\cdot\theta']))) \in A\\
(o, (\rho, \theta)) \in \Sigma.\mathsf{pre}\\
((o, v), (\rho'', \theta'')) \in \Sigma.\mathsf{post}\\
\Sigma, t, v, o \vDash (\rho\cdot\rho', \Theta[t:=\theta\cdot\theta'])
}{
(h, A, o(\ell)) \stackrel{t}{\rightarrow}_\mathsf{h}
(h, A[\ell := (\Sigma, (\rho''\cdot\rho', \Theta[t:=\theta''\cdot\theta']))], v)
}
\and
\inferrule[AtomicOp-Stutter]{
(\ell, (\Sigma, (\rho\cdot\rho', \Theta[t:=\theta\cdot\theta']))) \in A\\
(o, (\rho, \theta)) \in \Sigma.\mathsf{pre}
}{
(h, A, o(\ell)) \stackrel{t}{\rightarrow}_\mathsf{h}
(h, A, o(\ell))
}
\end{mathpar}
\caption{Selected step rules of the operational semantics}\label{fig:opsem}
\end{figure*}

A \emph{configuration} $\gamma = (h, A, T)$ consists of a regular heap $h$, an atomic heap $A$, and a \emph{thread pool} $T$ which is a partial function mapping the thread id of each started thread to the command it is currently executing. A configuration $\gamma$ can \emph{step} to another configuration $\gamma'$, denoted $\gamma \rightarrow \gamma'$, if there is a started thread $t$ whose command is of the form $K[c]$, where $K$ is a reduction context and $c$ is a command that can step in the current state according to the head step relation $\stackrel{t}{\rightarrow}_\mathsf{h}$, defined by a number of step rules, shown in Fig.~\ref{fig:opsem} and in the Appendix.

Notice that, per rules \textsc{AtomicOp} and \textsc{AtomicOp-Stutter}, if the tied precondition for an operation is not available, the command gets stuck.\footnote{If a configuration where some thread is stuck is reachable, the program is
considered unsafe. We prove below that if a program is safe, it has no undefined behavior per C20.}
Otherwise, the operation consumes the tied precondition and nondeterministically produces a result value $v$ at which the operation is enabled, and the corresponding global tied postcondition $\rho''$ and local tied postcondition $\theta''$.

An operation stutters while the total tied resources are not consistent with the operation. This means the thread might be blocked until all events that happen-before it in the C20 execution have happened in the opsem execution. Indeed, the set
$E_\mathsf{ex}$ in the definition of consistency reflects the set of events that have happened in the opsem execution, and the opsem nondeterministically executes the events in any order consistent with $\hb$. Importantly, however, this does \emph{not} imply that all (or any) of the operation's $\rf^+$-predecessors, which explain the operation's result, have necessarily happened in the opsem execution yet.

We also allow operations to stutter spuriously; this does not matter since our approach is for safety only, not termination or other liveness properties.

\begin{theorem}
If a program is safe under the operational semantics, then it has no undefined behavior under C20 semantics.
\end{theorem}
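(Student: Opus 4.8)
The plan is to establish the contrapositive by a simulation argument. Suppose some consistent C20 execution of the instrumented program (taken in its high-level form, with RMWs as single events, matching the opsem) exhibits undefined behaviour; this is always localised to a nonatomic event $e_1$ accessing some location $\ell$ and an event $e_2$ that is a write to $\ell$ or a $\mathbf{free}(\ell)$, with $e_1$ not happening-before $e_2$ (a data race gives $e_1,e_2$ mutually $\hb$-unordered; a use-after-free even gives $(e_2,e_1)\in\hb$). I want to build a finite opsem run from the initial configuration to one in which the thread of $e_1$ is stuck, contradicting safety. A preliminary observation, using the standing assumption that every access of $\ell$ not $\hb$-before a $\mathbf{begin\_atomic}$ on $\ell$ happens $\hb$-after it and is enabled — hence, by validity of the atomic spec, is not nonatomic — is that $e_1$ precedes any $\mathbf{begin\_atomic}$ on $\ell$ in $\hb$; consequently $e_2$ is not an atomic access of $\ell$, and so is a nonatomic write to $\ell$ or a $\mathbf{free}(\ell)$ (with $\ell$ not atomic). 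Fix an $\hb$-respecting linearisation $\pi$ of the execution's events with $e_2$ before $e_1$ — possible precisely because $e_1$ does not happen-before $e_2$ — and let $S$ be the set of events strictly $\pi$-before $e_1$: it is $\hb$-downward-closed, contains $e_2$, and omits $e_1$.

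The core is a simulation lemma: replaying the events of $S$ in the order $\pi$ drives the opsem from the initial configuration to a configuration $\gamma_S$ satisfying an invariant relating it to $S$, namely that (i) exactly the threads with an event in $S$ have been spawned, and each such thread's current command is the residual obtained by running it through exactly its events in $S$ in $\po$-order; (ii) the regular heap holds, for every live nonatomic location $m$, the value of the $\mo$-greatest of $m$'s initialisation write and its nonatomic writes in $S$, deallocated and atomic cells being recorded accordingly; and (iii) for every atomic location $m$, the atomic heap records the total tied resource obtained by folding ``consume the tied precondition, produce the tied postcondition'' over the events of $S$ that lie in $E_\mathsf{at}$ and are on $m$, taken in any $\hb$-consistent order, starting from $(\Sigma_m.\rho_0,\varepsilon)$. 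I would prove the lemma by induction on the number of events in $S$, showing that each successive event can be replayed while preserving the invariant. The non-routine cases are: a nonatomic read obtains exactly the value recorded in its label, by coherence of the execution (its $\rf$-source is $\hb$- or $\mo$-maximal among the already-replayed writes to $m$); and an atomic operation has its tied precondition available by (iii) and passes the consistency check $\Sigma_m,t,v,o\vDash\omega$, witnessed by the execution itself as the atomic location trace — with $E_\mathsf{at}$ the enabled atomic events on $m$ and $\mathit{init}$ the initialisation write — and by taking $E_\mathsf{ex}$ to be the already-replayed events of $E_\mathsf{at}$ on $m$, which (as $S\cup\{e\}$ remains $\hb$-downward-closed) contains every $\hb$-predecessor of the event and excludes it and its $\hb$-successors. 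The last point requires the fold in (iii) to be independent of the chosen $\hb$-linearisation, which is where the cancellative commutative monoid laws enter, along with the componentwise (global, and acting-thread-local) bookkeeping of rule \textsc{AtomicOp}. If at some point the next event cannot be replayed, the opsem is already stuck and we are done; otherwise we reach $\gamma_S$.

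From $\gamma_S$ the conclusion follows: by (i), the $\po$-predecessors of $e_1$ are exactly the thread-of-$e_1$ events in $S$, so that thread's current command is a reduction context filled with the command for $e_1$, a nonatomic access of $\ell$. But $e_2\in S$ has already been replayed; since $e_2$ is a nonatomic write to $\ell$ or a $\mathbf{free}(\ell)$, we can choose $\pi$ (and the exact stopping point) so that at $\gamma_S$ the cell $\ell$ is in a state that blocks $e_1$ — removed from the regular heap if $e_2$ deallocates $\ell$ (and not reintroduced: addresses are not reused and, by the standing assumptions, there are no $\mathbf{end\_atomic}$ operations), or transiently inaccessible if $e_2$'s nonatomic write is still in progress when we stop. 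Hence the thread of $e_1$ is stuck at $\gamma_S$, so the program is unsafe, which proves the contrapositive.

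The step I expect to be the main obstacle is invariant (iii) and the accompanying discharge of the $\vDash$-check at every atomic replay step: one must reconcile rule \textsc{AtomicOp}'s per-operation consume/produce bookkeeping with the universal quantification over $\hb$-linearisations in the definition of consistency, and prove the tied-resource fold over each $\hb$-downward-closed $E_\mathsf{at}$-slice is well-defined. This is delicate because global tied resources may flow between $\hb$-incomparable events, so linearisation-independence does not follow from commutativity alone; the argument must exploit cancellativity of the monoids and, for the feasibility part of the fold, the safety hypothesis itself (which ensures that tied preconditions stay available along the replay).
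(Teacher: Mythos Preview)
Your overall architecture---simulate an $\hb$-prefix of the C20 execution in the opsem and derive a stuck configuration from a putative race---is the paper's approach. The difference is in the strength of the induction hypothesis, and this difference is exactly the obstacle you flag at the end but do not resolve.

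You fix a single linearisation $\pi$ and do induction along $\pi$-prefixes, carrying invariant~(iii) which asserts that the tied-resource fold succeeds in \emph{every} $\hb$-consistent order. But this invariant cannot be maintained from your hypothesis. When you extend a $\pi$-prefix $S''$ by the next event $e'$, you must show the fold over the atomic events of $S''\cup\{e'\}$ succeeds in \emph{all} $\hb$-orders, including those in which $e'$ is not last. For such an order $\sigma_1\cdot e'\cdot\sigma_2$, you need the tied precondition of $e'$ to be consumable from the state reached after $\sigma_1$. Safety only tells you the precondition is consumable from the state reached after replaying \emph{all} of $S''$ along $\pi$, i.e.\ after $\sigma_1\cdot\sigma_2$; cancellativity and commutativity let you conclude the two folds agree \emph{if both succeed}, but give you nothing about feasibility of the shorter one. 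The set $\sigma_1$ corresponds to an $\hb$-prefix of the execution that is in general \emph{not} a $\pi$-prefix, so your IH says nothing about it, and safety cannot be invoked there without first replaying to that configuration.

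The paper closes this gap by strengthening the induction: it proves, for every $\hb$-prefix $P$ (not just $\pi$-prefixes) and for \emph{every} $\hb$-consistent order on $P$, that the opsem replay in that order reaches a configuration $\gamma_P$ determined by $P$ alone. With this universal IH, any $\hb$-order on $E_{\mathsf{ex}}$ extends to some $\hb$-order on $P'$, whose replay succeeds by hypothesis; along that replay the atomic-heap entry for $\ell$ evolves exactly by the fold in the chosen order, so the fold succeeds and lands at $\omega$. That is what discharges $\vDash$. Once you strengthen your induction this way, your contrapositive framing and stuck-configuration endgame go through and coincide with the paper's data-race-freeness lemma.
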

\begin{proof}
Fix a C20 execution. It is sufficient to prove, for each $\mathsf{hb}$-prefix of the execution, that the prefix is data-race-free and that the opsem configuration corresponding to the prefix is reachable by taking the opsem steps corresponding to the events of the prefix in any order consistent with $\hb$. By induction on the size of the prefix. \emph{For more details, see Appendix~\ref{app:c20-soundness}.}
\end{proof}

\subsection{Proof of Core ARC}\label{sec:arc-c20}

\begin{figure}
$$\begin{array}{l}
\mathbf{public}\ \mathbf{class}\ \mathsf{Arc}\langle\mathsf{T}\ \mathbf{extends}\ \mathsf{Closeable}\rangle\ \{\\
\quad \mathbf{private}\ \mathbf{final}\ \mathsf{AtomicLong}\ \mathsf{counter} = \mathbf{new}\ \mathsf{AtomicLong}(1);\\
\quad \mathbf{public}\ \mathbf{final}\ \mathsf{T}\ \mathsf{payload};\\
\\
\quad \mathbf{public}\ \mathsf{Arc}(\mathsf{T}\ \mathsf{payload})\\
\quad \{\ \mathbf{this}.\mathsf{payload} = \mathsf{payload};\ \}\\
\\
\quad \mathbf{public}\ \mathbf{void}\ \mathsf{clone}()\ \{\ \mathsf{counter}.\mathsf{fetch\_and\_add\_relaxed}(1);\ \}\\
\\
\quad \mathbf{public}\ \mathbf{void}\ \mathsf{drop}()\ \{\\
\quad\quad \mathbf{long}\ \mathsf{v} = \mathsf{counter}.\mathsf{fetch\_and\_add\_release}(-1);\\
\quad\quad \mathbf{if}\ (\mathsf{v}\;{=}{=}\;1)\ \{\\
\quad\quad\quad \mathsf{AtomicLong}.\mathsf{fence\_acquire}();\\
\quad\quad\quad \mathsf{payload}.\mathsf{close}();\\
\quad\quad \}\\
\quad \}\\
\}
\end{array}$$
\caption{Core ARC in a Java-like language with C20 memory semantics}\label{fig:arc-java}
\end{figure}

To satisfy the assumptions stated in \S\ref{sec:c20-opsem}, except for the assumption that no access of the counter reads a value $v \le 0$, we verify a version of the code in Fig.~\ref{fig:arc} written in a hypothetical Java-like language with C20 memory semantics, shown in Fig.~\ref{fig:arc-java}. We assume the hypothetical $\mathsf{AtomicLong}$ class guarantees its initialization happens-before all accesses. The goal is to prove that, if accessing the payload requires an $\mathsf{arc}$ permission, then all accesses happen-before the closing of the payload.

For the atomic location, we use the atomic specification of Example~\ref{ex:atomic-spec}.

Notice that the global tied resource seems to track the value of the location exactly. Since a location's tied resource is tracked in the opsem's atomic heap, does this not imply sequentially consistent semantics? No, because the tied resource reflects only the events that have happened so far in the opsem execution; it does not take into account the ``future'' events, even though the former may read from the latter. It follows that in general, the global tied resource at the time of an access does \emph{not} match the value read by that access.

As we will see below, each $\mathsf{arc}$ permission will include ownership of one unit of global tied resource.

This choice of atomic specification is motivated by the following lemmas, which guarantee that only one decrement reads 1, and that after the acquire fence, there are no outstanding $\mathsf{arc}$ permissions:
\begin{lemma}\label{lem:arc-decrement}
If a total tied resource $(\rho, \Theta)$ is consistent with a decrement that reads 1, then $\Theta = 0$\footnote{We abuse 0 to denote the thread-bound resource that maps each thread to 0.}:
$$\Sigma, t, 1, \mathsf{FAA}_\mathsf{rel}(-1) \vDash (1, 0) \cdot (\rho, \Theta) \Rightarrow \Theta = 0$$
\end{lemma}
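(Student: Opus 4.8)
The plan is to unfold the definition of tied-resource consistency and then exploit the structure that the C20 axioms (coherence together with RMW atomicity) impose on the modification order of a single atomic location. Unfolding $\Sigma, t, 1, \mathsf{FAA}_\mathsf{rel}(-1) \vDash (1,0)\cdot(\rho,\Theta)$ yields: a fully enabled atomic location trace $(G, E_\mathsf{at}, \mathit{init})$ for some $\ell$; a decrement event $e \in E_\mathsf{at}$ of thread $t$ reading $1$; and a set $E_\mathsf{ex} \subseteq E_\mathsf{at}$ containing every $\mathsf{hb}$-predecessor of $e$ in $E_\mathsf{at}$ and no event equal to or $\mathsf{hb}$-after $e$, such that processing $E_\mathsf{ex}$ in any $\mathsf{hb}$-consistent order, consuming and producing the tied pre/postconditions of Example~\ref{ex:atomic-spec} starting from $(\Sigma.\rho_0,\varepsilon) = (1,\varepsilon)$, reaches $(1,0)\cdot(\rho,\Theta) = (1+\rho,\Theta)$. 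It therefore suffices to show that the thread-bound component reached after processing $E_\mathsf{ex}$ is $\varepsilon$.

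The key step is to show that $e$ is the \emph{unique} decrement reading $1$ in $E_\mathsf{at}$. In a fully enabled trace for this $\Sigma$, every event of $E_\mathsf{at}$ is an increment, a release decrement, or an acquire fence; fences are not writes, and every RMW reads from $E_\mathsf{at}\cup\{\mathit{init}\}$. RMW atomicity forces each RMW to read from its immediate $\mathsf{mo}$-predecessor and forces each write to be read by at most one RMW, so ``reads-from restricted to $E_\mathsf{at}\cup\{\mathit{init}\}$'' is a functional graph of in- and out-degree at most $1$ whose only sink is $\mathit{init}$; a cycle in it would yield an $\mathsf{eco}$ self-loop, contradicting coherence. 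Hence this graph is a single simple path ending at $\mathit{init}$ and covering all RMWs of $E_\mathsf{at}$, which (by RMW atomicity again) coincides with the $\mathsf{mo}$-chain $\mathit{init}, m_1, \dots, m_k$. Thus $m_1$ reads $\Sigma.v_0 = 1$, and the value read by $m_i$ is $1 + \sum_{j<i}\delta_j$, with $\delta_j = +1$ for an increment and $-1$ for a decrement. All these values are $\ge 1$ by full enabledness; and if $m_i$ is a decrement reading $1$ it writes $0$, forcing $m_{i+1}$ (if any) to read $0$, a contradiction --- so such an $m_i$ is $\mathsf{mo}$-maximal. Hence at most one RMW of $E_\mathsf{at}$ is a decrement reading $1$, and since $e$ is one and lies on the chain, $e = m_k$ is it.

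For the bookkeeping, note every event of $E_\mathsf{ex}$ is in $E_\mathsf{at}$, is not $e$, and is not $\mathsf{hb}$-after $e$, so (as $e$ is the only decrement reading $1$) the events of $E_\mathsf{ex}$ are increments, release decrements reading $\ge 2$, and acquire fences. Among these, increments and $\ge 2$-decrements have local pre- and postcondition $0$, while an acquire fence has local precondition $1$ (a unit it also returns). An easy induction with invariant ``the thread-bound component equals $\varepsilon$'', along any $\mathsf{hb}$-consistent processing order of $E_\mathsf{ex}$ from $(1,\varepsilon)$, then shows: no event of $E_\mathsf{ex}$ can be an acquire fence, since it would get stuck for want of a local resource unit that the invariant says is absent, contradicting that the processing succeeds; hence every event of $E_\mathsf{ex}$ has local pre- and postcondition $0$, so the invariant is preserved. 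Therefore the tied resource reached has thread-bound component $\varepsilon$, whence $\Theta = \varepsilon$, i.e.\ $\Theta = 0$.

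I expect the main obstacle to be the structural step of the second paragraph: establishing from coherence and RMW atomicity that the per-location write/RMW events form a single $\mathsf{mo}$-chain in which each RMW reads from its immediate predecessor --- in particular, ruling out an RMW ``reading from the $\mathsf{mo}$-future''. This rests on $\mathsf{mo}\cup\mathsf{rf}\cup\mathsf{fr}$ being acyclic per location (equivalently, $\mathsf{eco}$ being irreflexive), which C20 enjoys but which is slightly stronger than the one-line coherence statement recalled in \S\ref{sec:c20}; it is worth isolating as a separate lemma about consistent C20 executions. The remainder is routine commutative-monoid bookkeeping, together with the observation that, once a local tied resource unit is known never to be created along $E_\mathsf{ex}$, it can never be consumed either.
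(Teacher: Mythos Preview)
Your proof is correct and takes essentially the same route as the paper's: both hinge on showing that $e$ is the only decrement-reading-1 in $E_\mathsf{at}$ (so none lies in $E_\mathsf{ex}$), via the per-location $\mathsf{mo}$/$\mathsf{rf}$ chain structure together with full enabledness. Your version is more explicit---you spell out the $\mathsf{mo}$-chain argument and separately dispose of acquire fences, whereas the paper jumps straight from ``nonzero local tied postcondition'' to ``decrement reading 1'' and leaves the chain structure implicit---but the underlying idea is the same; note also that your fence paragraph, while correct, is not strictly needed for the conclusion, since fences have zero \emph{net} local effect and hence cannot contribute to $\Theta$.
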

\begin{proof}
Fix an atomic location trace, an event $e$, and a set $E_\mathsf{ex}$ that witnesses the consistency. By contradiction: assume some event $e'$ in $E_\mathsf{ex}$ has a nonzero local tied postcondition. It follows that $e'$ is a decrement that reads 1. However, since both $e$ and $e'$ are decrements that read 1, some increment that reads 0 must intervene between $e'$ and $e$ in modification order. This contradicts the fact that all events in the atomic location trace are enabled. 
\end{proof}

\begin{lemma}\label{lem:arc-fence}
If a total tied resource $(\rho, \Theta)$ is consistent with an acquire fence, then $\rho = 0$:
$$\Sigma, t, 0, \mathsf{fence}_\mathsf{acq} \vDash (0, 0[t:=1]) \cdot (\rho, \Theta) \Rightarrow \rho = 0$$
\end{lemma}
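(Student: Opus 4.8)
The plan is to unfold the definition of consistency and exhibit, in the witnessing atomic location trace, the increment that supplies the unit of global tied resource which the fence's local precondition $(0, 0[t:=1])$ alone cannot account for. First I would fix a fully enabled atomic location trace $(G, E_\mathsf{at}, \mathit{init})$, an event $e \in E_\mathsf{at}$ labeled by $\mathsf{fence}_\mathsf{acq}$ in thread $t$ with result $0$, and a set $E_\mathsf{ex} \subseteq E_\mathsf{at}$ (containing all $\hb$-predecessors of $e$ in $E_\mathsf{at}$, none of $e$'s $\hb$-successors, and not $e$) witnessing $\Sigma, t, 0, \mathsf{fence}_\mathsf{acq} \vDash (0, 0[t:=1]) \cdot (\rho, \Theta)$. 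Fix one $\hb$-consistent total order of $E_\mathsf{ex}$ along which, starting from $(\rho_0, \varepsilon) = (1, \varepsilon)$, consuming each event's tied precondition and producing its tied postcondition lands at $(0, 0[t:=1]) \cdot (\rho, \Theta) = (\rho, 0[t:=1]\cdot\Theta)$. So the global component of the running total tied resource ends at $\rho$; I must show $\rho = 0$.

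The key observation is a bookkeeping identity on the global component: by inspecting $\Sigma.\mathsf{pre}$ and $\Sigma.\mathsf{post}$ from Example~\ref{ex:atomic-spec}, every event in $E_\mathsf{ex}$ either (i) is a relaxed increment, contributing net $+1$ to the global component (consumes $1$, produces $2$), (ii) is a release decrement, contributing net $-1$ (consumes $1$, produces $0$), or (iii) is an acquire fence, contributing net $0$ (consumes $0$, produces $0$ globally). Hence $\rho = 1 + (\#\text{increments in } E_\mathsf{ex}) - (\#\text{decrements in } E_\mathsf{ex})$. Since $\mathcal{R}_\mathsf{G} = (\mathbb{N}, +, 0)$ and the composition is truncated/cancellative over $\mathbb{N}$, for the computation to even be defined at every prefix the running global value must stay a genuine natural number, so in particular $\rho \ge 0$. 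The remaining task is to show $\rho \le 0$, equivalently $\#\text{decrements} \ge 1 + \#\text{increments}$ among events of $E_\mathsf{ex}$.

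For this I would argue, analogously to Lemma~\ref{lem:arc-decrement}, from the fact that the full trace is \emph{fully enabled}: no access of the counter reads a value $\le 0$. The fence event $e$ reads result $0$, so it has a local precondition demanding one unit of local tied resource in thread $t$; the only source of local tied resource is a decrement reading exactly $1$ (the third clause of $\mathsf{post}$). Such a decrement $d$ reading $1$ exists, and because $e$ is an acquire fence with local precondition $0[t:=1]$ that must have been produced by some event in $E_\mathsf{ex}$ before $e$'s own step --- here I need that $d$ is in fact $\hb$-before $e$ (so $d \in E_\mathsf{ex}$), which should follow because the unit of local resource in thread $t$ consumed conceptually by $e$ had to be produced by an event already executed, i.e.\ in $E_\mathsf{ex}$, and the only producing operation is a decrement reading $1$. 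Given $d \in E_\mathsf{ex}$ reading $1$: in modification order on $\ell$, the writes $\mathit{init}$ (value $1$), all increments, and all decrements are totally ordered; $d$ reads the value $1$, which must be the value written by $d$'s $\mathsf{rf}$-source in $E_\mathsf{at}\cup\{\mathit{init}\}$. Counting along $\mathsf{mo}$ from $\mathit{init}$ up to $d$'s source using the net-contribution bookkeeping again, the value read by $d$ equals $1 + (\text{increments before } d\text{'s source}) - (\text{decrements before } d\text{'s source})$ in $\mathsf{mo}$; this being $1$ forces as many decrements as increments strictly $\mathsf{mo}$-before $d$'s source. Then $d$ itself is a further decrement $\mathsf{mo}$-at-or-after its source, and every such $\mathsf{mo}$-prefix event that is $\hb$-before $e$ lies in $E_\mathsf{ex}$; carefully matching the $\mathsf{mo}$-order argument of Lemma~\ref{lem:arc-decrement} to the set $E_\mathsf{ex}$ yields $\#\text{decrements in } E_\mathsf{ex} \ge 1 + \#\text{increments in } E_\mathsf{ex}$, hence $\rho \le 0$, hence $\rho = 0$.

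The main obstacle I expect is exactly this last coordination step: the clean $\mathsf{mo}$-counting argument of Lemma~\ref{lem:arc-decrement} is about the \emph{whole} fully enabled trace, whereas here I need the inequality to hold for the (possibly strict) $\hb$-downward-closed subset $E_\mathsf{ex}$, and I must rule out the degenerate possibility that the decrement $d$ reading $1$, or some increment it would balance against, lies outside $E_\mathsf{ex}$ in a way that breaks the count. Establishing that $d \hb e$ (so that the synchronization the fence provides is genuinely backed by a decrement that has already ``happened'' in the opsem sense) is the crux; I would get it from the structure of the local-tied-resource algebra --- the unit in $0[t:=1]$ sitting in front of the fence's own contribution can only have been produced by an earlier event in the fixed total order of $E_\mathsf{ex}$ --- combined with the fact that, within thread $t$, $d$ and $e$ are $\po$-ordered (both touch $\ell$), and the release-decrement-to-acquire-fence pattern is precisely what $\mathsf{sw}$/$\mathsf{hb}$ is designed to capture, so $d \hb e$. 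Everything else is routine monoid arithmetic over $(\mathbb{N},+,0)$.
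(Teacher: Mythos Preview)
Your overall shape matches the paper's proof: fix a witnessing trace, locate a decrement $d$ reading $1$ that produced the thread-$t$ local unit, and do a counting argument. But the two steps you flag as ``the main obstacle'' and then wave past are exactly where the work is, and invoking Lemma~\ref{lem:arc-decrement} does not discharge them.

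First, you need that the $n$ decrements $\mo$-preceding $d$ all lie in $E_\mathsf{ex}$. Your remark that ``the release-decrement-to-acquire-fence pattern is precisely what $\mathsf{sw}/\hb$ is designed to capture, so $d \hb e$'' is aimed at the wrong target: $d$ is already in thread $t$ (the local resource is thread-bound), so $d\,\po\,e$ gives $d\,\hb\,e$ for free. The nontrivial synchronization claim is for the \emph{other} decrements, which may be in other threads. The paper uses that every such decrement is a release RMW, and the intervening RMWs form a release sequence ending at $d$, which is a relaxed-or-stronger read $\po$-before the acquire fence $e$; hence each of them $\mathsf{sw}$-relates to $e$ and is therefore in $E_\mathsf{ex}$. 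Without this you cannot conclude that $E_\mathsf{ex}$ contains $n{+}1$ decrements.

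Second, you need that $E_\mathsf{ex}$ contains no increments beyond the $n$ that $\mo$-precede $d$. Lemma~\ref{lem:arc-decrement}'s argument is of no help here: it only uses full enabledness to rule out an increment reading $0$ between two decrements reading $1$, which says nothing about extra increments $\mo$-after $d$ appearing in $E_\mathsf{ex}$. The paper's actual mechanism is different: it exploits the \emph{universal} quantification over $\hb$-consistent orderings in the definition of consistency. If some increment $e'' \in E_\mathsf{ex}$ $\mo$-succeeds $d$, then by coherence $e''$ cannot happen-before $d$ or any earlier decrement, so one can choose an $\hb$-consistent order of $E_\mathsf{ex}$ in which $e''$ comes after all $n{+}1$ decrements and the $n$ early increments; at that point the global resource is $0$ and $e''$'s tied precondition cannot be consumed, contradicting consistency. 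Your sketch never invokes this ``pick a bad ordering'' step, and without it the inequality $\#\text{decr} \ge 1 + \#\text{incr}$ in $E_\mathsf{ex}$ does not follow.
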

\begin{proof}
Fix an atomic location trace, an event $e$, and a set $E_\mathsf{ex}$ that witnesses the consistency. By the existence of the local tied resource in thread $t$, some event $e'$ must precede $e$ in thread $t$ that produces it. $e'$ must be a decrement that reads value 1. It follows that the events that $\mo$-precede $e'$, as ordered by $\mo$, must consist of the $\mathit{init}$ event, followed by $n$ increments and $n$ decrements, in some order. Since the decrements (including $e'$) are release events, they happen-before $e$ and are therefore in $E_\mathsf{ex}$. Since their tied preconditions can be consumed, there must be at least $n$ increments in $E_\mathsf{ex}$, each of which happens-before one of these decrements and therefore $\mo$-precedes $e'$. It suffices to prove that the total number of increments in $E_\mathsf{ex}$ is $n$. For this, in turn, it suffices to prove that all increments in $E_\mathsf{ex}$ $\mo$-precede $e'$. Suppose an increment $e'' \in E_\mathsf{ex}$ $\mo$-succeeds $e'$. It therefore cannot happen-before $e'$ or any of the decrements $\mo$-preceding $e'$. But then there exists an order of the events in $E_\mathsf{ex}$ consistent with $\hb$ where $e''$ occurs at a point where the global tied resource has already reached 0 so the tied precondition for $e''$ cannot be consumed, which is a contradiction.
\end{proof}

We apply stock Iris \cite{iris,iris-groundup} to our operational semantics. For an atomic location $\ell$ we use an Iris cancellable invariant with tag $\tau$ containing $\mathsf{Inv}_{\ell,\tau,\gamma,\gamma'}$ defined as follows, where $F$ ranges over bags of positive real numbers, $\sum F$ means the sum of the elements of $F$, $\#F$ means the number of elements in $F$, and $\ell \stackrel{q}{\mapsto} \_$ denotes a fractional permission \cite{fractions-boyland,permissions-parkinson} with fraction $q \in [0, 1] \subseteq \mathbb{R}$ for location $\ell$:
$$\begin{array}{l}
\mathsf{Inv}_{\ell,\tau,\gamma,\gamma'} = \exists \rho, \Theta, F.\\
\quad \ell \mapsto_\mathsf{at} (\Sigma, (\rho, \Theta)) * (\Theta \neq 0 \lor [\tau]_{1/2}) * \ell + 1 \stackrel{1 - \sum F}{\mapsto} \_\\
\quad *\; [\tau]_{(1 - \sum F)/2} * \dbox{\ensuremath{\bullet (\rho, \Theta)}}_\gamma * \dbox{\ensuremath{\bullet F}}_{\gamma'} \land \rho = \#F
\end{array}$$
The invariant holds full permission for the atomic location. Furthermore, one half of the cancellation token $[\tau]$ is initially inside the invariant and removed by the thread that decreases the count to zero, exploiting Lemma~\ref{lem:arc-decrement} (at which point a local tied resource is produced, which is never destroyed). There is an element $q \in F$ for each outstanding $\mathsf{arc}$ permission, denoting the fraction of the ownership of $\ell + 1$ owned by that $\mathsf{arc}$ permission; the remainder is inside the invariant. That $\mathsf{arc}$ permission also owns a fraction $q/2$ of the cancellation token. ``Fictional ownership'' by $\mathsf{arc}$ permissions of tied resources and elements of $F$ is realized using Iris \emph{ghost cells} $\gamma$ and $\gamma'$ using the \textsc{Auth} resource algebra, whose \emph{authoritative parts}, denoted by $\bullet$, are held in the invariant, and for which a \emph{fragment}, denoted by $\circ$, is held in each $\mathsf{arc}$ permission. A ghost cell's fragments always add up exactly to the authoritative part.

We define predicate $\mathsf{arc}$ as follows:
$$\begin{array}{l}
\mathsf{arc}(\ell, v) = \exists q, \tau, \gamma, \gamma'.\;[\tau]_{q/2} * {}^\tau\boxed{\mathsf{Inv}_{\ell,\tau,\gamma,\gamma'}} * \ell + 1 \stackrel{q}{\mapsto} v\\
\quad *\; \dbox{\ensuremath{\circ \llbrace q\rrbrace}}_{\gamma'} * \dbox{\ensuremath{\circ (1, 0)}}_{\gamma}
\end{array}$$

Immediately before the $\mathbf{fence}_\mathbf{acq}$ command, thread $t$ owns $[\tau]_{1/2} * \dbox{\ensuremath{\circ (0, 0[t:=1])}}_\gamma$. When the fence succeeds, thanks to Lemma~\ref{lem:arc-fence}, the thread can cancel the invariant, end the atomic location, and free the memory cells.

\section{The XC20 and YC20 memory consistency models}\label{sec:xc20}

We brief recall the XC20 memory consistency model \cite{xmm}, and we define our minor strengthening of it, which we call YC20.

Given a C20 execution graph $G$, we define the \emph{relaxed program order} $\mathsf{rpo}$ as the transitive closure of the subset of $\mathsf{po}$ that only relates events $e$ and $e'$ if:
\begin{itemize}
\item $e$ is a relaxed (or stronger\footnote{with $\mathsf{na} < \mathsf{rlx} < \mathsf{acq} < \mathsf{acqrel}$ and $\mathsf{rlx} < \mathsf{rel} < \mathsf{acqrel}$}) read or RMW and $e'$ is an acquire fence, or
\item $e$ is an acquire (or stronger) event, or
\item $e'$ is a release (or stronger) event, or
\item $e$ is a release fence and $e'$ is a relaxed (or stronger) write.
\end{itemize}

A program behavior is allowed by the XC20 memory consistency model if it matches a consistent C20 execution graph that can be \emph{constructed} from the empty graph through a number of \emph{XMM construction steps}. There are two kinds of such steps: Execute steps and Re-Execute steps.

An Execute step simply adds a $\mathsf{porf}$-maximal event: it relates graphs $G$ and $G'$ if both are consistent C20 execution graphs and $G'.E = G.E \cup \{e\}$ and $G'|_{G.E} = G$ and $e$ is $\mathsf{porf}$-maximal in $G'$.

A Re-Execute step selects an $\mathsf{rf}$-complete subset of the events of the original execution, called the \emph{committed} set, and a $\mathsf{po}$-maximal $\mathsf{po}$-prefix of the committed set called the \emph{determined set}, and then constructs a new execution, starting from the determined set, where, at every step, a $\mathsf{po}$-maximal event $e$ is added such that either $e$ is not a read or RMW event, or $e$ reads from an event that was added earlier, or $e$ is in the committed set. The newly constructed execution must in the end be a consistent C20 execution graph and must include the entire committed set, which implies that all events that were added from the committed set again read from the same event that they read from in the original execution.

More precisely, a Re-Execute step relates graphs $G$ and $G'$ if all of the following hold:
\begin{itemize}
\item $G$ and $G'$ are consistent C20 execution graphs,
\item there exists a set of events $C$, called the \emph{committed set}, that is a subset of $G.E$ and $G'.E$ such that $G|_C = G'|_C$ and every read or RMW event in $C$ reads from an event in $C$,
\item there exists a subset $D$ of $C$, called the \emph{determined set}, that is $G.\mathsf{po}$-prefix-closed, i.e.~any event that $G.\mathsf{po}$-precedes an event in $D$ is also in $D$,
\item $D$ is $G.\mathsf{po}$-maximal in $C$, i.e.~if an immediate $G.\mathsf{po}$-successor of an event in $D$ is in $C$ then it is in $D$,
\item if $G'.\mathsf{rpo}$ relates an event $e$ to a non-determined event $e'$, then $e$ is a determined event,
\item $G'$ can be constructed from $G|_D$ through a sequence of Guided Steps under $C$.
\end{itemize}

A Guided Step under $C$ relates graphs $G$ and $G'$ if $G'$ is obtained by adding a $\mathsf{po}$-maximal event to $G$ that either is not a read or RMW event, or reads from an event in $G$, or is in $C$.

The definitions above define XMM and XC20 if \emph{low-level} executions are used, where each RMW is represented as a pair of a read event and a write event, related by the $\mathsf{rmw}$ relation, and they define YMM and YC20 if \emph{high-level} executions are used, where each RMW is represented as a single event labelled by an RMW operation. The only difference is that in XMM, it is possible for only one of the two events constituting an RMW to be in the committed set of a Re-Execute step, whereas in YMM, an RMW is entirely in the committed set or not at all. In XMM, if a Re-Execute step relates graphs G and G', as far as we can tell, it is possible for the write of a relaxed increment in G to be used to ``justify'' a relaxed decrement in G', which would seem to break our approach for verifying Core ARC.

\citet{xmm} proved a number of important results about XC20, including optimal compilation schemes to the main instruction set architectures and the soundness of a number of important program transformations performed by optimizing compilers. We would expect these to still hold for YC20, but verifying this is future work.

\section{A verification approach for YC20}\label{sec:yc20-logic}
In this section, we propose an operational semantics for YC20 (\S\ref{sec:yc20-opsem}) and we apply it to verify Core ARC (\S\ref{sec:arc-yc20}).

\subsection{An operational semantics for YC20}\label{sec:yc20-opsem}

We say a total tied resource $\omega$ is \emph{grounding-consistent} with an operation $o$ by a thread $t$ resulting in a value $v$ under an atomic specification $\Sigma$, denoted $\Sigma, t, v, o \vDash_\mathsf{g} \omega$, if there exists an atomic location trace under $\Sigma$ where $E_\mathsf{at}$ includes an event $e$ labeled by an operation $o$ by thread $t$ resulting in $v$ and all events in $E_\mathsf{at} \setminus \{e\}$ are enabled under $\Sigma$ and there exists a subset $E_\mathsf{ex} \subseteq E_\mathsf{at}$ that includes all events that happen before $e$ as well as all release events in $E_\mathsf{at} \setminus \{e\}$ but does not include $e$ itself or any events that happen after $e$, such that for every total ordering of $E_\mathsf{ex}$ consistent with happens before, starting from $(\Sigma.\rho_0, \varepsilon)$ it is possible to, in this order, consume the tied precondition and produce the tied postcondition of each event $e' \in E_\mathsf{ex}$ and arrive at $\omega$. We say the grounding-consistency is \emph{witnessed} by the atomic location trace, $e$, and $E_\mathsf{ex}$.

Our operational semantics for YC20 is exactly the same as the one for C20 presented in \S\ref{sec:c20-opsem}, except that we do not make the assumptions made there, and that we add one constraint on atomic specifications: that each operation's tied precondition be \emph{sufficient}. Informally, this means that during the grounding process the tied precondition ensures the operation is performed only at a value at which it is enabled under the atomic specification. Formally, if $(o, (\rho, \theta)) \in \mathsf{pre}$ and $\Sigma, t, v, o \vDash_\mathsf{g} (\rho, \varepsilon[t:=\theta]) \cdot \omega$, then $(o, v) \in \mathrm{dom}\,\mathsf{post}$.

\begin{theorem}
If a program is safe under the operational semantics, then it has no undefined behavior under YC20 semantics.
\end{theorem}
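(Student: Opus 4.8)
The plan is to re-run the argument of Theorem~1 --- fix an execution and show, by induction on its $\hb$-prefixes, that every prefix is data-race-free and that the opsem configuration recording exactly its events (with each atomic heap entry obtained by feeding the prefix's accesses of that location to the atomic specification in any $\hb$-consistent order) is reachable --- but now the hypothesis that every atomic access occurring in the execution is enabled, which was assumed for free under C20, must be \emph{derived}, and so must the good behaviour of $\mathbf{begin\_atomic}$ (it happens-before every access of its location; the location is never freed or turned nonatomic again). Accordingly I would fix a YC20 execution $G$ with a witnessing sequence of XMM construction steps $\varnothing = G_0, G_1, \dots, G_n = G$ and prove, by induction on this sequence, that each $G_i$ has all its atomic accesses enabled and satisfies the above data-race-freedom/reachability invariant; absence of undefined behaviour is the case $i=n$. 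The good behaviour of $\mathbf{begin\_atomic}$ is recovered along the way, since the configurations that would violate it are stuck and safety forbids them.

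Consider an Execute step $G_{i-1}\to G_i$ adding a $\porf$-maximal event $e$. Every other event of $G_i$ lies in $G_{i-1}$, hence is enabled by the induction hypothesis, and nothing happens-after $e$: a $\porf$-maximal event has no outgoing $\po$ edge and, since nothing reads from it, no outgoing $\sw$ edge, hence no outgoing $\hb$ edge. If $e$ is an atomic access $o$ of a location $\ell$ with specification $\Sigma$ by a thread $t$, let $P$ be the $\hb$-downward closure of the union of $e$'s $\po$-predecessors with the atomic accesses of $\ell$ in $G_i$ other than $e$. Since $e$ is $\hb$-maximal, $e\notin P$, so $P$ is an $\hb$-prefix of $G_{i-1}$, its opsem configuration is reachable by the induction hypothesis, and in it thread $t$ is poised to execute $o(\ell)$. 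Safety forbids a stuck thread, so $o$'s tied precondition $(\rho,\theta)$ is available: $(o,(\rho,\theta))\in\Sigma.\mathsf{pre}$, and the atomic heap at $\ell$ equals $(\rho,\varepsilon[t{:=}\theta])\cdot\omega$ where $\omega$ is obtained by feeding the atomic accesses of $\ell$ in $G_i$ other than $e$ to $\Sigma$. Then $G_i$, with $E_\mathsf{at}$ its set of atomic accesses of $\ell$ and $E_\mathsf{ex}=E_\mathsf{at}\setminus\{e\}$, witnesses $\Sigma,t,\mathsf{lab}(e).v,o\vDash_\mathsf{g}(\rho,\varepsilon[t{:=}\theta])\cdot\omega$: $E_\mathsf{ex}$ contains all happens-before predecessors of $e$ in $E_\mathsf{at}$ and all release events of $E_\mathsf{at}\setminus\{e\}$, contains neither $e$ nor --- vacuously --- any happens-before successor of $e$, and evaluates to $(\rho,\varepsilon[t{:=}\theta])\cdot\omega$ under every $\hb$-consistent order by the order-independence already needed for Theorem~1. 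Sufficiency of the tied precondition then yields $(o,\mathsf{lab}(e).v)\in\mathrm{dom}\,\mathsf{post}$, i.e.\ $e$ is enabled; with this the $\hb$-prefix induction for $G_i$ proceeds exactly as for Theorem~1 --- the new prefixes are precisely those containing $e$, and for these the now \emph{fully} enabled atomic location trace $G_i$ makes the ordinary consistency relation $\vDash$ hold at the relevant configuration, so the \textsc{AtomicOp} step producing $e$'s result value is enabled.

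For a Re-Execute step $G_{i-1}\to G_i$ with committed set $C$ and determined set $D$, I would argue by a sub-induction along the Guided Steps building $G_i$ from $G_i|_D$ that every intermediate graph has all its atomic accesses enabled; the Theorem~1 argument then again delivers data-race-freedom and reachability for $G_i$. The base case holds because the $D$-events carry the same labels as in $G_{i-1}$ (since $D\subseteq C$ and $G_{i-1}|_C=G_i|_C$), where they are enabled by the outer hypothesis. A Guided Step adds a $\po$-maximal event $e$ that either lies in $C$ --- then it is a relabelled copy of an outer-hypothesis-enabled event of $G_{i-1}$ --- or is not a read/RMW, or reads from an already-added event; in the latter two cases $e$ closes no $\porf$ cycle and the Execute-step argument applies essentially verbatim, with the tied precondition coming from safety and enabledness from sufficiency against a grounding-consistency witness assembled from the atomic accesses of $\ell$ in $G_i$.

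I expect the Re-Execute case to be the main obstacle. In a re-execution the happens-before structure of $G_i$ differs from that of $G_{i-1}$, and a freshly added, already-grounded event may pick up new synchronization edges; grounding-consistency is built for exactly this, insisting that $E_\mathsf{ex}$ contain \emph{all} release events of $E_\mathsf{at}\setminus\{e\}$ and not merely the happens-before predecessors of $e$, and the $\rpo$ side-condition on Re-Execute steps --- every $\rpo$-predecessor of a non-determined event is determined --- is precisely what forces those release events to have been properly grounded, rather than ``justified by the future,'' so that the opsem has replayed them by the time it reaches $e$ and the tied-resource bookkeeping lines up. Turning this into a proof requires reconciling the $\po$-maximal order in which Guided Steps add events, the happens-before order used by the opsem and by (grounding-)consistency, and modification order, and exhibiting for each event under consideration a set $E_\mathsf{ex}$ with the required closure properties that evaluates to exactly the total tied resource safety makes available --- all while relying only on enabledness of events added strictly before it. Everything else is a routine adaptation of the Theorem~1 proof.
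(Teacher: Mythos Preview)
Your high-level plan---outer induction on the YMM construction sequence, and for each step derive enabledness of the new events from safety plus sufficiency of the tied precondition against a grounding-consistency witness---is the same plan the paper follows. Your Execute case is essentially the paper's argument verbatim, and your observation that the Guided-Step order is $\hb$-consistent and that the $\rpo$ side-condition forces the relevant release events to have been added early are exactly the paper's Lemmas~6 and~5/7.

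Where your decomposition diverges from the paper is in the Re-Execute case, and there is a circularity you have not resolved. You propose to first run a sub-induction along the Guided Steps establishing only \emph{enabledness} of each added event, and then apply the Theorem~1 argument once to $G_i$ to get reachability of $\hb$-prefixes. But the enabledness step already needs reachability: to invoke safety at the moment thread $t$ is about to execute $e$, you need an opsem configuration $\gamma_P$ that is \emph{reachable}, and every \textsc{AtomicOp} step along the way to $\gamma_P$ requires the ordinary consistency relation $\vDash$, which in turn needs a \emph{fully enabled} atomic location trace. That trace must cover the $\rf^+$-predecessors of every atomic access in $P$, some of which may only be shown enabled by later stages of your sub-induction. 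So enabledness and reachability cannot be proved in sequence; they must be interleaved.

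The paper resolves this by packaging both into a single nested induction. It first abstracts both Execute and Re-Execute into a common notion: a \emph{grounding order} (a total order on $G'$'s events, $\hb$-consistent, agreeing with the Guided-Step order on non-determined events) together with a \emph{weakly grounded} predicate encoding exactly the two guarantees you identified---either the event is already grounded via the committed set, or it reads from an earlier event \emph{and} all release $\rf^+$-predecessors of all grounding-order predecessors lie earlier. The main lemma then runs an outer induction on the grounding-order prefix $P_n$ and an inner induction on $\hb$-prefixes of $P_n$: the outer step proves $P_n$ is grounded (using reachability of $\gamma_{P_{n-1}}$, available from the previous outer round), and the inner step proves every $\hb$-sub-prefix of $P_n$ corresponds to a reachable configuration (using the just-established groundedness of $P_n$ to build fully enabled atomic location traces). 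Your sketch has all the ingredients but stops short of this interleaving; the phrase ``Execute-step argument applies essentially verbatim'' hides precisely the point where, unlike in the Execute case, $e$ is \emph{not} $\hb$-maximal in $G_i$, so the prefix $P$ you reach via safety cannot be ``everything but $e$'' and you must instead take $P = P_{n-1}$ and rely on the release-events-are-early property to make $E_\mathsf{ex}$ satisfy the grounding-consistency side-conditions.
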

\begin{proof}
It is sufficient to prove that each C20 execution reachable through YMM's Execute and Re-Execute steps is \emph{grounded} and has no undefined behavior. An execution is grounded if, essentially, each atomic event is enabled under its atomic specification. By induction on the number of YMM steps. First, we prove based on the definition of YMM that for any Execute or Re-Execute step, if the original execution is grounded, the new execution is \emph{weakly grounded}, meaning essentially that there exists some \emph{grounding order}, a total order on the events of the execution, such that each event is either grounded or reads from an event that precedes it in the grounding order. We then prove, for any weakly grounded C20 execution, that each $\hb$-prefix of each grounding-order-prefix-closed subset of the execution is data-race-free and grounded and corresponds to an opsem configuration that is reachable by taking the opsem steps corresponding to the events of the prefix in any order consistent with $\hb$. By induction on the size of the subset and nested induction on the size of the prefix. \emph{For more details, see Appendix~\ref{app:yc20-soundness}.}
\end{proof}

\subsection{Proof of Core ARC}\label{sec:arc-yc20}

We can now verify the unmodified ARC code from Fig.~\ref{fig:arc}, without any assumptions. The atomic specification used, the Lemmas~\ref{lem:arc-decrement} and \ref{lem:arc-fence} and the Iris proof are adopted unchanged from \S\ref{sec:arc-c20}. The only difference is that we now need to prove the following lemma:
\begin{lemma}
The tied preconditions are sufficient.
\end{lemma}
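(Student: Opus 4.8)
The plan is to verify, for each of the three enabled operations, that whenever its tied precondition is available in a grounding-consistent total tied resource, the operation's result value lies in $\mathrm{dom}\,\mathsf{post}$. Concretely, we must check: (i) for $o = \mathsf{FAA}_\mathsf{rlx}(1)$ with tied precondition $(1,0)$, if $\Sigma, t, v, o \vDash_\mathsf{g} (1, \varepsilon[t:=0]) \cdot \omega = (1,\varepsilon)\cdot\omega$ then $1 \le v$; (ii) for $o = \mathsf{FAA}_\mathsf{rel}(-1)$ with tied precondition $(1,0)$, the same; (iii) for $o = \mathsf{fence}_\mathsf{acq}$ with tied precondition $(0,1)$, the result is always $0$, so there is nothing to prove. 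So the content is entirely in cases (i) and (ii), and in fact, since the tied precondition is $(1,0)$ in both cases and grounding-consistency only forces the release events (plus all $\hb$-predecessors of $e$) into $E_\mathsf{ex}$, the two cases are essentially symmetric; the decrement case is the harder one because there release events are forced in even when they are not $\hb$-before $e$.

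The key argument is a counting argument over modification order, in the spirit of Lemma~\ref{lem:arc-fence}. Fix a witnessing atomic location trace, event $e$ (labeled by $o$ by thread $t$ resulting in $v$), and set $E_\mathsf{ex} \subseteq E_\mathsf{at}$. Consider the $\mathsf{mo}$-ordered sequence of write/RMW events on $\ell$ that $\mathsf{mo}$-precede $e$; since every event of $E_\mathsf{at}$ other than $e$ is enabled — each increment reads a value $\ge 1$, each decrement reads a value $\ge 1$ — and since $\mathit{init}$ writes $1$ and each increment adds $1$ and each decrement subtracts $1$, by induction on this $\mathsf{mo}$-prefix the value written by the $k$-th event equals $1 + (\text{increments among the first } k) - (\text{decrements among the first } k)$, and this quantity stays $\ge 1$ after every event except possibly the last. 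Hence just before $e$, i.e.~the value $e$ reads, which is $v$, equals $1 + I - D$ where $I, D$ count the increments and decrements strictly $\mathsf{mo}$-before $e$. We must show $I - D \ge 0$, i.e.~$v \ge 1$. The idea is that the tied-resource bookkeeping forces this: in any $\hb$-consistent total order of $E_\mathsf{ex}$ we start from global resource $\rho_0 = 1$, each increment in $E_\mathsf{ex}$ consumes $1$ and produces $2$ (net $+1$), each decrement consumes $1$ and produces $0$ (net $-1$), and at the end we reach $(\rho,\Theta)$ with $\rho \ge 1$ (since the tied precondition $(1,0)$ must still be consumable after). So $1 + (\text{increments in } E_\mathsf{ex}) - (\text{decrements in } E_\mathsf{ex}) \ge 1$, and moreover the intermediate partial sums must all stay $\ge 1$ for the preconditions to be consumable — this is where we pin down which events can be in $E_\mathsf{ex}$.

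The main obstacle — and the step to carry out carefully — is relating the $\mathsf{mo}$-before-$e$ counts $I, D$ to the in-$E_\mathsf{ex}$ counts. For the decrement case (ii): by grounding-consistency, all release events of $E_\mathsf{at}\setminus\{e\}$ are in $E_\mathsf{ex}$; the decrements are release events, so every decrement of $E_\mathsf{at}$ other than $e$ is in $E_\mathsf{ex}$, in particular every decrement $\mathsf{mo}$-before $e$ is in $E_\mathsf{ex}$, so $D \le (\text{decrements in } E_\mathsf{ex})$; conversely I would argue, exactly as in Lemma~\ref{lem:arc-fence}, that every increment in $E_\mathsf{ex}$ must $\mathsf{mo}$-precede $e$ — otherwise it $\mathsf{mo}$-succeeds $e$, hence cannot $\hb$-precede $e$ nor any decrement $\mathsf{mo}$-before-$e$, so one can schedule it (in an $\hb$-consistent order of $E_\mathsf{ex}$) after all those decrements have already driven the global resource down, making its precondition $(1,0)$ unconsumable — contradiction; hence $(\text{increments in } E_\mathsf{ex}) \le I$. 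Combining, $v = 1 + I - D \ge 1 + (\text{increments in }E_\mathsf{ex}) - (\text{decrements in }E_\mathsf{ex}) \ge 1$, as needed. The relaxed-increment case (i) is analogous but easier: every $\hb$-predecessor of $e$ that is an access of $\ell$ is $\mathsf{mo}$-before $e$ (by coherence, since $\mathsf{rf};\mathsf{mo}$ and $\mathsf{fr}$ cannot oppose $\hb$), so the increments and decrements that are $\hb$-before $e$ (which is all of $E_\mathsf{ex}$ that is $\hb$-before $e$, plus the forced release events, which are again decrements and so $\mathsf{mo}$-before $e$ as argued) are all $\mathsf{mo}$-before $e$, and the same scheduling argument rules out $\mathsf{mo}$-after-$e$ increments in $E_\mathsf{ex}$; one then concludes $v \ge 1$ identically. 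I would also double-check the degenerate edge cases ($e$ itself being $\mathsf{mo}$-minimal, reading from $\mathit{init}$, so $v = 1$) which are handled uniformly by the induction above.
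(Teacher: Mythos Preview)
Your plan follows the same counting-and-scheduling strategy as the paper (and indeed as Lemma~\ref{lem:arc-fence}), but the attempt to argue \emph{directly} via the inequality chain
\[
v \;=\; 1 + I - D \;\ge\; 1 + (\text{incr.\ in }E_\mathsf{ex}) - (\text{decr.\ in }E_\mathsf{ex}) \;\ge\; 1
\]
breaks at the first inequality. The culprit is the claim ``every increment in $E_\mathsf{ex}$ must $\mo$-precede $e$'' (i.e.\ $I' \le I$): this is simply false in general. Take three pairwise $\hb$-incomparable events $a$ (relaxed increment), $e$ (release decrement), $b$ (relaxed increment) with $\mo$-order $\mathit{init} < a < e < b$; then $v=2$, $I=1$, and $E_\mathsf{ex}=\{a,b\}$ is a legal witnessing set with $I'=2>I$. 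Your scheduling argument does not yield a contradiction here: after processing the $\mo$-before-$e$ events the global resource is $1+I-D=v$, which need not be~$0$, so the precondition of a $\mo$-after-$e$ increment can perfectly well be consumed. (The same issue bites in your case~(i): the ``forced release events'' are all decrements in $E_\mathsf{at}\setminus\{e\}$, \emph{including} those $\mo$-after $e$, so they are not ``$\mo$-before $e$ as argued''.)

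The paper avoids this by arguing \emph{by contradiction}: it first observes $v\ge 0$ and then assumes $v=0$. Under that assumption the $\mo$-prefix of $e$ has exactly $n$ increments and $n{+}1$ decrements, and after processing those the global resource is exactly~$0$; now the scheduling argument really does rule out any $\mo$-after-$e$ access in $E_\mathsf{ex}$, forcing $I'=n$ and $D'\ge n{+}1$, which contradicts the final resource being $\ge 1$. Recasting your argument as a contradiction from $v=0$ (and treating both FAA cases uniformly, as the paper does) closes the gap with essentially no extra work.
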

\begin{proof}
Fix an operation $o$ such that $(o, (\rho, \theta)) \in \mathsf{pre}$ and a value $v$ such that $\Sigma, t, v, o \vDash_\mathsf{g} (\rho, 0[t:=\theta]) \cdot \omega$. It suffices to prove that $(o, v) \in \mathrm{dom}\,\mathsf{post}$.
Let $(G, E_\mathsf{at}, \mathit{init})$ be an atomic location trace and $e \in E_\mathsf{at}$ and $E_\mathsf{ex} \subseteq E_\mathsf{at}$ an event and set of events that witness the grounding-consistency. The case $o = \mathsf{fence}_\mathsf{acq}$ is trivial because its only possible result value is 0; assume $o$ is an $\mathsf{FAA}$ operation. By $\rf$-completeness of the atomic location trace and the fact that all events other than $e$ are enabled under $\Sigma$, $e$ must read a nonnegative value $v$. It suffices to prove that $v \neq 0$. By contradiction; assume $v = 0$. Then there exists a sequence of $\mathsf{FAA}$ events in $E_\mathsf{at}$ such that the first one reads from $\mathsf{init}$, each next one reads from the previous one, and $e$ reads from the last one. It follows that this sequence contains $n$ increment operations and $n + 1$ decrement operations. Since $E_\mathsf{ex}$ contains all release events in $E_\mathsf{at} \setminus \{e\}$, it contains these $n + 1$ decrement operations. Since it is possible to consume all of their tied preconditions, $E_\mathsf{ex}$ must also contain $n$ increment operations that each happen-before one of these decrement operations. It follows that $E_\mathsf{ex}$ must contain the $n$ increment operations that $\mo$-precede $e$. It now suffices to prove that $E_\mathsf{ex}$ contains only these $n$ increment operations. Suppose there was some increment event $e' \in E_\mathsf{ex}$ that $\mo$-succeeds $e$. It follows that $e'$ cannot happen-before any of the decrements that $\mo$-precede $e$. It follows that there is an order on the events of $E_\mathsf{ex}$ consistent with $\hb$ where $e'$ occurs after exactly $n$ increments and $n + 1$ decrements so the global tied resource has already reached 0 at the point where the tied precondition for $e'$ is consumed, which is a contradiction.
\end{proof}

\section{Related work}\label{sec:related-work}

Core ARC was verified before in FSL++ \cite{fslpp}, an extension of Fenced Separation Logic \cite{fsl} with support for ghost state, as well as in the Iris-based approach of Relaxed RustBelt \cite{relaxed-rustbelt}. The latter even verified full ARC, except that they had to strengthen two relaxed accesses to acquire accesses. One of these was in fact a bug in ARC; the status of the other remains open. Both of these earlier proofs, however, assume the absence of $\porf$ cycles.

The concept of tied resources was inspired by AxSL \cite{axsl}, an Iris-based separation logic for the relaxed memory model of the Arm-A processor architecture, which allows load buffering. In AxSL, however, tied resources are tied to \emph{events}, not locations. While an in-depth comparison is future work, their logic and their soundness proof appear to be quite different from ours.

\section{Conclusion}\label{sec:conclusion}

We presented a preliminary result on the modular verification of relaxed-consistency programs under YC20 semantics, a minor strengthening of XC20, a new memory consistency model that
rules out out-of-thin-air behaviors while allowing load buffering, and we used it to verify Core ARC.
The most urgent next step is to further elaborate the soundness proof, and ideally mechanize it,
to mechanise the Core ARC proof, and to verify that the results proved for XC20 by \citet{xmm} also hold for YC20.
Others are to investigate whether the approach supports other important relaxed algorithms besides
Core ARC (such as full ARC), and how to apply the approach in a semi-automated verification tool such as VeriFast \cite{fvf}.
It would also be interesting to try and eliminate the open assumption we used to prove Core ARC under C20 semantics, or else to
obtain some kind of an impossibility result.

Our operational semantics is a non-intrusive extension of classical semantics for sequentially-consistent (SC) languages, such as
the HeapLang language targeted by Iris by default. This suggests its compatibility with existing applications, extensions, and tools
for logics for SC languages such as Iris and VeriFast. The opsem used by Relaxed RustBelt \cite{relaxed-rustbelt}, in contrast, replaces
the regular heap by a \emph{message pool} combined with three \emph{views} per thread (the acquire view, the current view, and the release view). Comparing the pros and cons of these approaches is important future work.

\bibliography{arc-paper}

\section*{About the authors}
\shortbio{Bart Jacobs}{is an associate professor at the DistriNet research group at the department of Computer Science at KU Leuven (Belgium). His main research interest is in modular formal verification of concurrent programs. \authorcontact[https://distrinet.cs.kuleuven.be/people/BartJacobs/]{bart.jacobs@kuleuven.be}}
\shortbio{Justus Fasse}{is a PhD student at the DistriNet research group at the department of Computer Science at KU Leuven (Belgium). His research is on modular formal verification of concurrent programs, with a focus on total correctness properties. \authorcontact[https://distrinet.cs.kuleuven.be/people/JustusFasse/]{justus.fasse@kuleuven.be}}

\setcounter{section}{0}
\renewcommand\thesection{\Alph{section}}

\section{Appendix}

\subsection{Operational semantics}

We define the \emph{reduction contexts} $K$ as follows:
$$K ::= \square\ |\ \mathbf{let}\ x = K\ \mathbf{in}\ c$$
We write $K[c]$ to denote the command obtained by replacing the hole ($\square$) in K by $c$.

We use $c[v/x]$ to denote substitution of a value $v$ for a variable $x$ in command $c$. Notice that our expressions $e$ never get stuck and have no side-effects. We treat closed expressions that evaluate to the same value as equal.

\begin{figure*}
\begin{mathpar}
\inferrule[Cons]{
\{\ell, \dots, \ell + n - 1\} \cap \mathrm{dom}\,h = \emptyset
}{
(h, A, \mathbf{cons}(v_1, \dots, v_n)) \stackrel{t}{\rightarrow}_\mathsf{h}
(h[\ell:=v_1,\dots,\ell + n - 1:=v_n], A, \ell)
}
\and
\inferrule[NA-Read]{
}{
(h[\ell:=v], A, [\ell]_\mathbf{na}) \stackrel{t}{\rightarrow}_\mathsf{h}
(h[\ell:=v], A, v)
}
\and
\inferrule[NA-Write-Start]{
}{
(h[\ell:=v_0], A, [\ell] :=_\mathbf{na} v) \stackrel{t}{\rightarrow}_\mathsf{h}
(h[\ell:=\varnothing], A, [\ell] :='_\mathbf{na} v)
}
\and
\inferrule[NA-Write-End]{
}{
(h[\ell:=\varnothing], A, [\ell] :='_\mathbf{na} v) \stackrel{t}{\rightarrow}_\mathsf{h}
(h[\ell:=v], A, 0)
}
\and
\inferrule[Free]{
}{
(h[\ell:=v], A, \mathbf{free}(\ell)) \stackrel{t}{\rightarrow}_\mathsf{h}
(h[\ell:=\bot], A, 0)
}
\and
\inferrule[If-True]{
v \neq 0
}{
(h, A, \mathbf{if}\ v\ \mathbf{then}\ c) \stackrel{t}{\rightarrow}_\mathsf{h} (h, A, c)
}
\and
\inferrule[If-False]{
}{
(h, A, \mathbf{if}\ 0\ \mathbf{then}\ c) \stackrel{t}{\rightarrow}_\mathsf{h} (h, A, 0)
}
\and
\inferrule[Let]{}{
(h, A, \mathbf{let}\ x = v\ \mathbf{in}\ c) \stackrel{t}{\rightarrow}_\mathsf{h} (h, A, c[v/x])
}
\and
\inferrule[Head-Step]{
(h, A, c) \stackrel{t}{\rightarrow}_\mathsf{h} (h', A', c')
}{
(h, A, T[t:=K[c]]) \rightarrow (h', A', T[t:=K[c']])
}
\and
\inferrule[Fork]{
t' \neq t\\
t' \notin \mathrm{dom}\,T
}{
(h, A, T[t:=K[\mathbf{fork}(c)]]) \rightarrow (h, A, T[t:=K[0]][t':=c])
}
\end{mathpar}
\caption{Remaining operational semantics step rules}\label{fig:opsem2}
\end{figure*}

The step relation of our operational semantics is defined in Figs.~\ref{fig:opsem} and \ref{fig:opsem2}. The primed nonatomic write
command $[\ell] :='_\mathbf{na} v$ is a syntactic construct that is not allowed to appear in source programs and occurs only
during execution. It denotes a nonatomic write in progress. This way of modeling nonatomic writes, borrowed from RustBelt \cite{jung-phd}, ensures that a program with a data race has a configuration reachable in the opsem where one of the racing threads is stuck.

\subsection{A Verification Approach for C20: Soundness}\label{app:c20-soundness}

In this section, we prove that if a program is safe under the operational semantics, it has no undefined behavior under C20 semantics. We here concentrate on proving data-race-freedom; other types of undefined behavior can be handled similarly.

For the remainder of this section, we fix an instrumented program and we assume that it is safe under the operational semantics.

We fix a C20 execution graph $G$.

In this section we assume that for any location $\ell$ that is accessed atomically, a $\mathbf{begin\_atomic}(\ell, \Sigma)$ event coincides with the initializing write to $\ell$, which happens-before all other accesses of $\ell$, and we assume all accesses of $\ell$ are enabled under $\Sigma$. (We lift these assumptions in \S\ref{sec:yc20-logic}.) We say $\ell$ has atomic specification $\Sigma$.

\begin{definition}[Execution Prefix]
We say a subgraph of the execution is an \emph{execution prefix} if it is prefix-closed with respect to happens-before, i.e. if an event is in the prefix, then all events that happen before it are also in the prefix.
\end{definition}

\begin{definition}[Data race]
A data race is a pair of accesses of the same location, not ordered by happens-before, at least one of which is a write and at least one of which is nonatomic. For the purposes of this definition, we treat allocations, deallocations, and conversion to or from atomic mode as nonatomic writes.
\end{definition}

Notice that under the assumptions of this section, all data races are among nonatomic accesses and atomic accesses are never involved in a data race.

We say a configuration $\gamma$ of the operational semantics corresponds to an execution prefix $P$, denoted $P \sim \gamma$, if all of the following hold:
\begin{itemize}
\item the thread pool matches the final configurations of the threads of the prefix
\item the prefix has no data races.
\item the nonatomic heap maps each allocated cell that is only accessed nonatomically to the value written by its final write.
\item the atomic heap maps each allocated cell $\ell$ that is accessed atomically to its atomic specification and to the tied resource reached by, starting from $(\Sigma.\rho_0, \varepsilon)$, first producing the tied postconditions of the atomic access events on $\ell$ in the prefix, and then consuming their tied preconditions.
\end{itemize}

Importantly, we have $P \sim \gamma \land P \sim \gamma' \Rightarrow \gamma = \gamma'$.

\begin{lemma}
In any reachable configuration of the operational semantics, for each location $\ell$, one of the following holds:
\begin{itemize}
\item Not yet allocated: $h(\ell) = \bot$ and $A(\ell) = \bot$
\item In nonatomic mode: $\exists v.\;h(\ell) = v \lor h(\ell) = \varnothing$ and $A(\ell) = \bot$
\item In atomic mode: $h(\ell) = \varnothing$ and $\exists \Sigma, \omega.\;A(\ell) = (\Sigma, \omega)$
\item Deallocated: $h(\ell) = \varnothing$ and $A(\ell) = \bot$
\end{itemize}
\end{lemma}
\begin{proof}
By induction on the number of steps.
\end{proof}

We say an execution prefix $P$ is \emph{valid} if it corresponds to a configuration $\gamma$ and $\gamma$ is reached by executing the events of $P$ in any order consistent with happens-before.

\subsubsection{Proof of main lemma}

In this sub-subsection, we prove the main lemma, which says that every execution prefix is valid. By induction on the size of the prefix. We fix an execution prefix $P$, and we assume all smaller prefixes are valid. The case where $P$ is empty is trivial; assume $P$ is nonempty.

\begin{lemma}
$P$ is data-race-free.
\end{lemma}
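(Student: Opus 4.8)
The plan is to argue by contradiction: if $P$ contains a data race, then the standing safety assumption is violated, because we can reach an operational-semantics configuration in which one of the racing threads is stuck. The crucial feature we exploit is that the opsem models a nonatomic write in two head steps (\textsc{NA-Write-Start} followed by \textsc{NA-Write-End}), leaving the target cell in the ``blocked'' state $\varnothing$ in between, so that any concurrent nonatomic access of that cell has no applicable rule.

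Concretely: since $P$ is nonempty, I would first pick an $\hb$-maximal event $e$ of $P$. The subgraph $P \setminus \{e\}$ is again an execution prefix (removing an $\hb$-maximal event preserves $\hb$-prefix-closedness) and is strictly smaller, hence valid by the induction hypothesis, and in particular data-race-free; therefore any data race of $P$ must involve $e$. So assume $(e, e')$ is a data race. By the observation immediately preceding the lemma, $e$ and $e'$ are nonatomic accesses of one and the same location $\ell$ (which is thus in nonatomic mode), at least one of which is a write; relabel if necessary so that $e$ is a write, while $e'$ is a nonatomic read or write. Since same-thread events are $\po$- and hence $\hb$-ordered while $e, e'$ are $\hb$-unordered, they belong to distinct threads $t \ne t'$.

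Next I would carve out the sub-prefix $Q = \{\, d \in P \mid \neg(e\ \hb\ d) \land \neg(e'\ \hb\ d)\,\}$. Using transitivity and irreflexivity of $\hb$ together with $\hb$-prefix-closedness of $P$, one checks that $Q$ is an execution prefix, that $e, e' \in Q$ (here $\neg(e'\ \hb\ e)$ and $\neg(e\ \hb\ e')$ are exactly ``$\hb$-unordered''), and that both $e$ and $e'$ are $\hb$-maximal in $Q$. Hence $Q \setminus \{e, e'\}$ is an execution prefix and a proper subset of $P$, so it is strictly smaller, and by the induction hypothesis it is valid: it corresponds to a configuration $\delta$ reachable by executing its events in \emph{any} order consistent with $\hb$. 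Because $e$ and $e'$ are $\hb$-maximal in $Q$ and live in different threads, the events of thread $t$ in $Q \setminus \{e, e'\}$ are precisely the $\po$-predecessors of $e$ and form a $\po$-prefix, and symmetrically for $t'$ and $e'$; so by the thread-pool clause of $\sim$, in $\delta$ thread $t$ is poised to perform $e$ (its command is a reduction context around $[\ell] :=_\mathbf{na} v$) and thread $t'$ is poised to perform $e'$ (around $[\ell]_\mathbf{na}$, or around $[\ell] :=_\mathbf{na} v'$). Since $\ell$ is allocated and in nonatomic mode in $\delta$, I can step thread $t$ once by \textsc{NA-Write-Start}, reaching $\delta'$ with $h_{\delta'}(\ell) = \varnothing$ and thread $t'$'s command unchanged; but then in the reachable configuration $\delta'$ thread $t'$ is stuck, as neither \textsc{NA-Read} nor \textsc{NA-Write-Start} applies when $h(\ell) = \varnothing$. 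This contradicts safety, so $P$ has no data race.

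The main obstacle is the $Q$-construction and the way it uses the induction hypothesis: the hypothesis is needed in its strong ``valid'' form — reachable by executing the prefix's events in \emph{any} $\hb$-consistent order — precisely so that a \emph{single} configuration $\delta$ has \emph{both} racing threads simultaneously poised at their conflicting accesses; the bookkeeping showing $Q$ is a prefix with $e$ and $e'$ both $\hb$-maximal, and that the corresponding opsem threads are at the expected program points, is the fiddly part. Once that is in place, the passage from ``both threads poised, one about to write'' to ``stuck configuration'' is exactly what the two-step encoding of nonatomic writes (borrowed from RustBelt) is designed to provide, so that step is routine.
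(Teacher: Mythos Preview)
Your argument is correct and follows essentially the same route as the paper's: derive a contradiction from a putative race by using the induction hypothesis on a smaller prefix to reach an opsem configuration where the two racing threads are simultaneously at their conflicting accesses, then exploit the two-phase encoding of nonatomic writes to exhibit a stuck thread. The only cosmetic difference is that the paper keeps $e'$ in the smaller prefix $P'$ and schedules it last (so $t'$ has \emph{just executed} $e'$ while $t$ is about to execute $e$), whereas you remove both $e$ and $e'$ and then take one \textsc{NA-Write-Start} step; your packaging arguably makes the ``by case analysis on $e$ and $e'$'' step more explicit.
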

\begin{proof}
By contradiction. Assume there are two events $e, e' \in P$, in threads $t$ and $t'$ not ordered by $\hb$. Obtain $P'$ by removing from $P$ $e$ as well as the events that happen-after $e$ or $e'$. By the induction hypothesis, $P'$ corresponds to some configuration, and this configuration is reachable by executing $e'$ last. Therefore, in this configuration, $t$ is about to execute $e$ and $t'$ has just executed $e'$. By case analysis on $e$ and $e'$, we obtain a contradiction.
\end{proof}

\begin{lemma}
For every order on the events of $P$ consistent with $\hb$, a configuration $\gamma$ such that $P \sim \gamma$ is reachable by executing the events in this order.
\end{lemma}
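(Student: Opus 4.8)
The plan is a case analysis on the $\preceq$-maximal event of $P$. Fix an $\hb$-consistent total order $\preceq$ on the events of $P$; since $P \neq \emptyset$, let $e$, in thread $t$, be its $\preceq$-last event. Then $e$ is $\hb$-maximal in $P$, so $P' := P \setminus \{e\}$ is again an execution prefix, and it is strictly smaller, hence \emph{valid} by the induction hypothesis: the (unique) configuration $\gamma'$ with $P' \sim \gamma'$ is reached by executing the events of $P'$ in any $\hb$-consistent order, in particular in $\preceq$ restricted to $P'$. It therefore suffices to show that from $\gamma'$ the opsem can take the step(s) corresponding to $e$ --- two consecutive steps for a nonatomic write, possibly preceded and/or followed by administrative \textsc{If}/\textsc{Let} steps of $t$ --- reaching some $\gamma$ with $P \sim \gamma$. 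Two facts make the remaining bookkeeping for the thread-pool and heap components routine: since $P$ is $\hb$-prefix-closed and $e$ is $\hb$-maximal in it, $e$ is the program-order-last event of $t$ in $P$, and no event of $P$ is $\hb$-after $e$.

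For the non-atomic cases the premises of the relevant rule hold in $\gamma'$ by data-race-freedom together with coherence. For a nonatomic read $e$ reading from a write $w$: $w$ and $e$ are $\hb$-comparable (else they race), and $e \stackrel{\hb}{\to} w$ is impossible since $\rf \subseteq \mathsf{eco}$ and $\mathsf{eco}$ is consistent with $\hb$; so $w \stackrel{\hb}{\to} e$, hence $w \in P'$, and a symmetric use of $\mathsf{fr} \subseteq \mathsf{eco}$ and totality of $\mo$ shows $w$ is $\hb$-maximal among writes to that location in $P'$, so $\gamma'$'s nonatomic heap holds exactly the value read by $e$. Allocation, free, and (under this section's standing assumption) $\mathbf{begin\_atomic}$ are handled in the same spirit: data-race-freedom puts the location in the expected state in $\gamma'$. \textsc{Fork} goes through because the forked thread's events are all $\hb$-after $e$, hence absent from $P'$; the administrative steps depend only on already-determined local values. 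In each case the resulting configuration satisfies $P \sim \gamma$ because $e$ is $t$'s last event and the only new write (if any) to its location in $P$.

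The main obstacle is the \textsc{AtomicOp} case. Here $e$ is an atomic access on a location $\ell$ with specification $\Sigma$, operation $o$, and result $v$; by the standing assumption $e$ is enabled, so $o \in \mathrm{dom}\,\Sigma.\mathsf{pre}$ and $(o, v) \in \mathrm{dom}\,\Sigma.\mathsf{post}$. Write $A(\ell) = (\Sigma, \omega')$ in $\gamma'$. \emph{First}, the tied precondition of $o$ must be a factor of $\omega'$, for \textsc{AtomicOp} (indeed even \textsc{AtomicOp-Stutter}) to be applicable: this is forced by the assumed safety, since $\gamma'$ is reachable, $t$ is poised there to perform $e$, and absent the precondition $t$ would be stuck. \emph{Second}, the consistency side-condition $\Sigma, t, v, o \vDash \omega'$ must hold; I would witness it by the atomic location trace consisting of $G$ (consistent), the set $E_\mathsf{at}$ of all atomic events on $\ell$ (all enabled by the standing assumption, and, using $\rf$-completeness and validity of $\Sigma$, reading only from $E_\mathsf{at} \cup \{\mathit{init}\}$ --- so the trace is well-formed and fully enabled), and the initializing write; by the distinguished event $e$; and by $E_\mathsf{ex} := E_\mathsf{at} \cap P'$, which contains every atomic-on-$\ell$ event $\hb$-before $e$, excludes $e$, and excludes everything $\hb$-after $e$, as required. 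For the monoid condition, any $\hb$-consistent order of $E_\mathsf{ex}$ extends to one of $P'$; by the induction hypothesis, executing $P'$ in that order reaches $\gamma'$, and along the way the $A(\ell)$-component evolves by exactly consuming the precondition and producing the postcondition of each atomic-on-$\ell$ event, never getting stuck, ending at $\omega'$ --- which is precisely what the consistency predicate demands. Then \textsc{AtomicOp} fires with the chosen result $v$, and that the new $A(\ell)$ equals the value prescribed by $P \sim \gamma$ reduces to a one-line cancellation in the cancellative commutative monoid $\mathcal{R}_\mathsf{T}$, comparing ``produce all postconditions, then consume all preconditions'' over $P$ against the same over $P'$ followed by one extra consume/produce for $e$.
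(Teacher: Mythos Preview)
Your proof is correct and follows essentially the same approach as the paper: remove the $\hb$-maximal last event $e$ from the chosen order, apply the outer induction hypothesis to $P' = P \setminus \{e\}$, and case-split on $e$; in the atomic case you witness the consistency side-condition with exactly the same data the paper uses --- the full graph $G$, $E_\mathsf{at}$ all atomic events on $\ell$, $\mathit{init}$ the $\mathbf{begin\_atomic}$ event, and $E_\mathsf{ex}$ the atomic accesses of $\ell$ in $P'$. Your treatment is in fact more explicit than the paper's in justifying the ``for every $\hb$-consistent ordering of $E_\mathsf{ex}$'' clause (by extending to an order on $P'$ and invoking safety along the resulting opsem trace) and in handling the non-atomic and administrative cases, but none of this departs from the paper's line of argument.
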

\begin{proof}
Fix such an order. Let $e$, in thread $t$, be the final event in this order. Apply the induction hypothesis to $P' = P \setminus \{e\}$ to obtain some $\gamma'$ such that $P' \sim \gamma'$. It suffices to prove that executing $e$ in $\gamma'$ reaches a configuration $\gamma$ such that $P \sim \gamma$. By case analysis on $e$. We elaborate one case.
\begin{itemize}
\item \textbf{Case} $e$ is an atomic operation. By the fact that $\gamma'$ is reachable and, by the safety of the program, therefore not stuck, we can, starting from $\gamma'$, consume $e$'s tied precondition and produce its tied postcondition to obtain $\gamma$. It remains to prove that $\Sigma, t, v, o \vDash \omega$, where $\mathsf{lab}(e) = (t, \ell, v, o)$ and $\gamma'.A(\ell) = (\Sigma, \omega)$. Take as the atomic location trace the entire execution graph $G$, with $E_\mathsf{at}$ the set of all atomic operations on $\ell$ in $G$, and $\mathit{init}$ the $\mathbf{begin\_atomic}(\ell, \Sigma)$ event. For $E_\mathsf{ex}$, take the accesses of $\ell$ in $P'$.
\end{itemize}
\end{proof}

\subsection{A Verification Approach for YC20: Soundness}\label{app:yc20-soundness}

In this section, we prove that if a program is safe under the operational semantics, it has no undefined behavior under YC20 \cite{xmm} semantics. We again focus on proving data-race-freedom.

For the remainder of this section, we fix an instrumented program and we assume that it is safe under the operational semantics.

\begin{lemma}
If the program obtained by inserting redundant nonatomic read-write pairs $[e] :=_\mathbf{na} [e]_\mathbf{na}$ into a YC20 program is data-race-free, then the original program is data-race-free as well.
\end{lemma}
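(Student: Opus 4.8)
The plan is to prove the contrapositive via a \emph{monotonicity} argument: inserting code cannot remove a data race. Concretely, I would show that every consistent, YMM-reachable C20 execution $G$ of the original program can be \emph{lifted} to a consistent, YMM-reachable execution $G'$ of the thickened program in which $\hb$ restricted to the events of $G$ is unchanged and every event of $G$ keeps its location and access kind. Given this, any data race of $G$ (a pair of $\hb$-unordered same-location accesses, at least one a write, at least one nonatomic) is still a data race of $G'$; hence if the thickened program is data-race-free, so is $G'$, and therefore so is $G$; since $G$ was an arbitrary execution of the original program, the original program is data-race-free.

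For the lifting I would process the inserted pairs one at a time. For an inserted $[e] :=_\mathbf{na} [e]_\mathbf{na}$ in thread $t$, let $\ell$ be the location $e$ evaluates to --- since the insertion is redundant, $\ell$ is allocated and in nonatomic mode at that point --- and let $w$ be any write on $\ell$ that coherence permits to be read at the insertion point (one always exists, e.g.\ the initialization write), writing some value $v$. I would add to $G$ a nonatomic read $r_\star$ on $\ell$ of value $v$ and a nonatomic write $w_\star$ on $\ell$ of value $v$, extend $\po$ so that $r_\star$ then $w_\star$ occur consecutively at the insertion point in $t$, set $\rf$ to relate $w$ to $r_\star$, and extend $\mo$ so that $w_\star$ immediately follows $w$. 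The resulting $G'$ is well-formed, and one checks routinely that it is consistent: $r_\star$ and $w_\star$ are nonatomic, so they induce no new $\mathsf{sw}$ and hence no new $\hb$; $r_\star$ reads a coherence-admissible write; and $w_\star$ sits directly after its $\mo$-predecessor, so no $\mathsf{eco}$--$\hb$ cycle and no RMW-atomicity violation is created. Moreover, since $\po$ restricted to the old events is unchanged and no new $\mathsf{sw}$ edges appear, $\hb$ and (one checks) $\rpo$ restricted to the old events coincide with those of $G$; in particular no old pair becomes $\hb$-ordered.

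It then remains to exhibit $G'$ as the result of a sequence of YMM construction steps. I would take a construction of $G$ and splice in, for each inserted pair, two extra steps immediately after the step adding the pair's $\po$-predecessor: one adding $r_\star$ --- which, crucially, reads from $w$, an event already present in every intermediate graph --- and one adding $w_\star$. In an Execute phase these are legitimate Execute steps once $r_\star$ and $w_\star$ are scheduled so as to be $\porf$-maximal when added; in a Re-Execute phase they are added by Guided Steps ($w_\star$ is not a read, and $r_\star$ reads from an already-added event), after assigning $r_\star$ and $w_\star$ to the committed set (never) and to the determined set (exactly when they $\po$-precede a determined event), which --- using $\po$-prefix-closedness and that $r_\star$, $w_\star$ are nonatomic --- keeps the determined set a $\po$-maximal $\po$-prefix of the committed set and preserves the constraint that every $\rpo$-predecessor of a non-determined event is determined.

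The main obstacle I anticipate is precisely this last step. YMM's Execute and Re-Execute steps carry delicate side conditions --- $\porf$-maximality of the added event, $\rf$-completeness and agreement on the committed set across $G$ and $G'$, $\po$-maximal $\po$-prefix-closedness of the determined set, and the $\rpo$/determined-set constraint --- and one must check that weaving the new, always-grounded nonatomic events into an existing construction breaks none of them, especially across a Re-Execute boundary. I expect this to go through because the added events are nonatomic and never ``read from nowhere'', but making the committed/determined-set bookkeeping precise is the delicate part; by comparison, checking consistency of $G'$ and preservation of $\hb$ on the old events is routine.
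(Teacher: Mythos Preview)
The paper states this lemma without proof; it is invoked only to justify treating $\mathbf{begin\_atomic}$ and $\mathbf{end\_atomic}$ as redundant nonatomic read-write pairs, and no argument is supplied. So there is nothing to compare against, and your proposal already goes well beyond what the paper offers.

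Your overall strategy---contrapositive via lifting each YC20 execution $G$ of the original program to an execution $G'$ of the thickened program in which $\hb$ restricted to the old events is unchanged---is the natural one, and the consistency-preservation part (the inserted nonatomic events create no $\mathsf{sw}$ and hence no new $\hb$, and $\rpo$ between old events is unchanged because $r_\star,w_\star$ carry no $\rpo$ edges) is essentially right. There is, however, a concrete gap in your Re-Execute bookkeeping. You propose that $r_\star,w_\star$ are \emph{never} placed in the committed set but are placed in the determined set exactly when they $\po$-precede a determined event. These two choices are incompatible: by definition the determined set is a subset of the committed set, so whenever the insertion point's old $\po$-successor $s$ lies in $D$, $\po$-prefix-closedness forces $r_\star,w_\star\in D'$ and hence $r_\star,w_\star\in C'$, contradicting ``never committed''. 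And once $r_\star\in C'$, $\rf$-completeness of $C'$ forces its read source $w$ into $C'$ as well---yet your choice of $w$ (``any coherence-admissible write, e.g.\ the initialization write'') carries no such guarantee, nor does it guarantee that $w$ has already been added when you insert the Guided Step for $r_\star$. You are right to flag this bookkeeping as the crux; the specific recipe you give does not survive the case $s\in D$, and repairing it requires choosing $w$ so that it is simultaneously coherence-admissible for $r_\star$ and already present (indeed, committed) whenever $s$ is determined, which needs a further argument. A secondary point: placing $w_\star$ ``immediately $\mo$-after $w$'' can violate RMW atomicity if some RMW reads from $w$; you would need to slide $w_\star$ past any RMW chain rooted at $w$, or argue that in that situation the inserted nonatomic events already race with the RMW so a data race is exhibited regardless.
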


We apply YC20 semantics to the instrumented program by treating the $\mathbf{begin\_atomic}$ and $\mathbf{end\_atomic}$ commands like redundant nonatomic read-write pairs.

%

\subsubsection{Grounded C20 executions}

\begin{definition}
We say an atomic access event $e$ on a location $\ell$ is \emph{grounded} with respect to a $\mathbf{begin\_atomic}(\ell, \Sigma)$ event if $e$ as well as $e$'s $\rf^+$ predecessors are enabled under $\Sigma$.
\end{definition}

\begin{definition}
We say an atomic access event $e$ on a location $\ell$ is \emph{grounded} if it has exactly one $\hb$-maximal $\hb$-preceding $\mathbf{begin\_atomic}(\ell, \Sigma)$ event $e'$ and $e$ is grounded with respect to $e'$.
\end{definition}

\begin{definition}
We say a C20 execution is \emph{grounded} if every atomic access event in this execution is grounded.
\end{definition}

\begin{definition}
We say an atomic access event $e$ on a location $\ell$ is \emph{weakly grounded} under some \emph{grounding order} (a total order on the events of the execution) if at least one of the following is true:
\begin{itemize}
\item At least one $\mathbf{begin\_atomic}(\ell, \_)$ event precedes $e$ in grounding order and $e$ is grounded with respect to the most recent preceding one in grounding order
\item both of the following are true:
\begin{itemize}
\item for any $\rf^+$ predecessor $e'$ of any transitive-reflexive grounding-order-predecessor of $e$, if $e'$ is a release event then $e'$ precedes $e$ in the grounding order
\item $e$ is a write or fence event or $e$'s $\rf$ predecessor exists and is before $e$ in the grounding order and is also weakly grounded or is a nonatomic access
\end{itemize}
\end{itemize}
\end{definition}

\begin{definition}
We say a C20 execution is \emph{weakly grounded} if there is a single total order on the events of the execution (called its \emph{grounding order}) consistent with happens-before such that every atomic access event of the execution is weakly grounded.
\end{definition}

\begin{lemma}\label{lem:exec-step-weakly-grounded}
If an Execute step goes from $G$ to $G'$, and $G$ is grounded, then $G'$ is weakly grounded. For the
grounding order, take some arbitrary total order on the events of $G$ consistent with happens-before, followed by the newly added event.
\end{lemma}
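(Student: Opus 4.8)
The plan is to exhibit the grounding order explicitly and verify the weak-groundedness clause for each atomic access event of $G'$ separately. Write $\sqsubset$ for the order obtained by fixing any total order $\sqsubset_0$ on $G.E$ extending $\hb$ and appending the new event $e$ as the $\sqsubset$-maximum, as in the lemma statement. First I would check that $\sqsubset$ is consistent with $\hb$ on $G'$. The only work is at $e$: since $\mathsf{sw}$, and hence $\hb$, is contained in $\porf^+$ (each synchronizes-with edge factors through a nonempty chain of reads-from edges, possibly bracketed by program-order steps), a $\porf$-maximal event has no $\hb$-successor, so $e$ may indeed go last; and for the same reason no $\hb$-path between two events of $G.E$ routes through $e$, so $\hb$ restricted to $G.E$ is unchanged by the Execute step and an $\hb$-extension $\sqsubset_0$ on $G$ suffices. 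I would also record, from $\porf$-maximality of $e$ and $\rf_{G'}|_{G.E}=\rf_G$, that every $\rf^+$-predecessor in $G'$ of any event of $G.E\cup\{e\}$ already lies in $G.E$ and hence $\sqsubset$-precedes $e$: the only event outside $G.E$ is $e$, and $e$ has no $\rf$-successor.

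For the new event $e$ itself I would verify the second disjunct of weak-groundedness. Its first part is vacuous, because every event of $G'$ other than $e$ precedes $e$ in $\sqsubset$ and $e$ is an $\rf^+$-predecessor of nothing, so every release $\rf^+$-predecessor of a $\sqsubset$-predecessor of $e$ lies in $G.E$ and precedes $e$. For the second part: if $e$ is a write or a fence there is nothing to prove; otherwise $e$ is a read or RMW, so by $\rf$-completeness of the consistent graph $G'$ it reads from some $w$, and an Execute step reads from an existing event, so $w\in G.E$ and $w\sqsubset e$; if $w$ is a nonatomic access we are done, and if $w$ is an atomic access then $w$ is grounded in $G$ (since $G$ is grounded), hence weakly grounded in $G'$ by the argument for old events below.

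That argument for old atomic access events is the heart of the proof, and I would run it via the first disjunct. Let $a\in G.E$ be an atomic access of a location $\ell$. Groundedness of $G$ gives a unique $\hb$-maximal $\hb$-preceding $\mathbf{begin\_atomic}(\ell,\Sigma)$ event $b$ of $a$ such that $a$ and all its $\rf^+$-predecessors are enabled under $\Sigma$; by the remark above these predecessors are the same in $G'$ as in $G$. From $b\,\hb\,a$ we get $b\sqsubset a$, so there is a $\mathbf{begin\_atomic}(\ell,\_)$ event before $a$ in $\sqsubset$; let $c$ be the $\sqsubset$-latest one. If $c=b$, then ``$a$ is grounded with respect to $c$'' is exactly what groundedness of $G$ supplies, and $a$ is weakly grounded. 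The case $c\neq b$ is where the real difficulty sits: one checks that $c$ must be $\hb$-incomparable with $a$ (it cannot $\hb$-follow $a$, since then $a\sqsubset c$; and it cannot $\hb$-precede $a$, since uniqueness of the $\hb$-maximal preceding $\mathbf{begin\_atomic}$ would force $c\,\hb\,b$ and hence $c\sqsubset b\sqsubset c$), so what must be ruled out is a ``competing'' $\mathbf{begin\_atomic}(\ell,\_)$ event, $\hb$-incomparable with $a$, that $\sqsubset_0$ happens to place between $b$ and $a$ and that carries a specification under which $a$, or some $\rf^+$-predecessor of $a$, is not enabled.

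I expect that last point to be the main obstacle, and I see two routes. The robust one is to strengthen the choice of $\sqsubset_0$: besides extending $\hb$, require every atomic access of $\ell$ to $\sqsubset_0$-precede every $\mathbf{begin\_atomic}(\ell,\_)$ event that does not happen-before it; then the $\sqsubset$-latest such event before $a$ necessarily happens-before $a$, hence equals $b$ by uniqueness. This reduces the lemma to showing the imposed constraints on $\sqsubset_0$ are acyclic, a combinatorial fact about grounded executions (the added edges all run from an atomic access to a $\mathbf{begin\_atomic}$ on the same location, and $\mathbf{begin\_atomic}$ events are not atomic accesses, so a putative cycle has a very restricted shape that one argues cannot close, using groundedness of $G$). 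The alternative is to show directly that, in a grounded execution, any $\mathbf{begin\_atomic}(\ell,\_)$ event that $\sqsubset_0$ places between $b$ and $a$ carries the same enabledness-relevant data as $\Sigma$. Either way, everything outside this $\mathbf{begin\_atomic}$-matching step is routine bookkeeping around the two facts that a $\porf$-maximal event is $\hb$-maximal and leaves $G$'s derived relations untouched, and that groundedness of $G$ hands us, for free, the enabledness of an old atomic event's $\rf^+$-predecessors.
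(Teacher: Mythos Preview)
The paper states this lemma without proof, so there is nothing to compare against directly. Your argument is careful and essentially correct: the treatment of the new event $e$ via the second disjunct is exactly right, and the two structural facts you isolate---that $\porf$-maximality of $e$ makes it $\hb$-maximal and that the Execute step leaves all of $\po$, $\rf$, $\sw$, $\hb$ on $G.E$ untouched---are precisely what drives the lemma.

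The subtlety you flag for old events is genuine for the lemma as literally stated: with an \emph{arbitrary} $\hb$-linearization $\sqsubset_0$, a ``competing'' $\mathbf{begin\_atomic}(\ell,\_)$ event $c\neq b$ that is $\hb$-incomparable with $a$ could indeed land between $b$ and $a$. But both of your proposed resolutions are heavier than needed. At the only point where this lemma is invoked (the induction step of the final theorem), the hypothesis also supplies that $G$ is data-race-free, and the paper treats $\mathbf{begin\_atomic}$ as a nonatomic write. Data-race-freedom therefore forces every $\mathbf{begin\_atomic}(\ell,\_)$ event to be $\hb$-comparable with every access of $\ell$. Your own case analysis already rules out $c\;\hb\;a$ and $a\;\hb\;c$; with data-race-freedom there is no third possibility, so $c=b$ and the first disjunct goes through immediately. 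In short: read the hypothesis as ``$G$ is grounded and data-race-free'' (which is what the surrounding theorem actually provides) and the obstacle vanishes without engineering a special $\sqsubset_0$ or reasoning about compatibility of specifications.
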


\begin{lemma}\label{lem:non-determined-committed-not-release}
During a Re-Execute step, a non-determined committed event $e$ is not a release event.
\end{lemma}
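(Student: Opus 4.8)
The plan is to exploit the $\mathsf{rpo}$ side condition of a Re-Execute step together with the fact that, in the $\mathsf{rpo}$-generating subset of $\mathsf{po}$, \emph{every} $\mathsf{po}$-predecessor of a release event qualifies (this is exactly the clause ``$e'$ is a release (or stronger) event'').

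So suppose, for contradiction, that the Re-Execute step goes from $G$ to $G'$ with committed set $C$ and determined set $D$, and that $e \in C \setminus D$ is a release event, say in thread $t$ and on location $\ell$. Since $e$ is a release event, every $G'.\mathsf{po}$-predecessor $e_0$ of $e$ satisfies $(e_0, e) \in G'.\mathsf{rpo}$, so by the side condition ``if $G'.\mathsf{rpo}$ relates $e$ to a non-determined $e'$ then $e$ is determined'' every such $e_0$ lies in $D$. Because $D$ is $G.\mathsf{po}$-prefix-closed, $D \cap t$ is an initial segment of thread $t$'s events; and since $D \subseteq C$ and $G|_C = G'|_C$ we have $G|_D = G'|_D$, so this segment, whose last event I call $p$ (which exists, as at least the initializing write of $\ell$ is $G'.\mathsf{po}$-before $e$), is the same in $G$ and $G'$. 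The previous observation gives that all of thread $t$'s events that are $G'.\mathsf{po}$-before $e$ lie in $D \cap t$, so in $G'$ the event $e$ immediately follows $p$ in thread $t$.

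The contradiction should then come from $D$ being $G.\mathsf{po}$-\emph{maximal} in $C$: that property says the immediate $G.\mathsf{po}$-successor of $p$ is not in $C$ unless it is in $D$, and the position ``right after $p$'' is not in $D$ (that is where $D \cap t$ stops). So if I can show that $e$ \emph{is} that immediate $G.\mathsf{po}$-successor of $p$ in $G$, I obtain $e \notin C$, contradicting $e \in C$. This is where the real work lies. I would argue that the next command executed by thread $t$ after the shared prefix ending in $p$ is the same in $G$ and $G'$, and that its first emitted event is forced: if that command's first memory operation does not read (e.g.\ a release write), the emitted event is fully determined and literally coincides in $G$ and $G'$; if the first operation does read — necessarily an RMW, since there are no release reads — one additionally uses that a committed event reads from the \emph{same} write in $G'$ as in $G$, so the emitted event carries the same read value, hence the same label, in both graphs, and again coincides with $e$.

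I expect this last reconciliation between $G$ and $G'$ to be the main obstacle. The definition of a Re-Execute step pins down only the \emph{relative} $\mathsf{po}$-order of committed events ($G|_C = G'|_C$), not their absolute position among the possibly different non-committed events interleaved in each execution, so care is needed to rule out that thread $t$'s control flow in $G$ ``wanders'' past $p$ — emitting some non-committed release events — before it finally reaches $e$. Intuitively this is precisely the configuration that the combination of the $\mathsf{rpo}$ side condition (forcing $e$ to be emitted immediately after $p$ in $G'$) and $G.\mathsf{po}$-maximality of $D$ in $C$ is designed to exclude, but the book-keeping — especially the RMW case, where the emitted event's value is not syntactically determined — is the delicate part.
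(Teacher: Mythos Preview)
The paper's proof is three lines and never touches program determinism. It argues: since $e\in C\setminus D$, $e$ has an uncommitted $\po$-predecessor $e'$ (``otherwise $e$ would be determined''); since $e$ is release, $(e',e)\in\rpo$; and the $\rpo$ side condition then forces $e'\in D\subseteq C$, contradicting $e'\notin C$.

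You run the same core idea in the contrapositive direction---apply the $\rpo$ side condition \emph{first} to force every $G'.\po$-predecessor of $e$ into $D$, and only then try to conclude $e\in D$. That last step is what manufactures your bridging obligation: you must now argue that $e$ sits at the immediate $G.\po$-successor position of $p$ so that $G.\po$-maximality of $D$ in $C$ applies, and this is where you reach for determinism of the thread-local semantics. The paper's direction avoids all of this: by exhibiting a single uncommitted predecessor and its outgoing $\rpo$-edge, it never needs to pin $e$ to a particular $G.\po$ slot, so the whole discussion of ``which command thread $t$ executes next'' and the write-versus-RMW case split simply does not arise.

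Your bridge is also not closed on its own terms. Even granting that the first post-$p$ event emitted in $G$ and in $G'$ carry the same label (and, in the RMW case, the same read value because committed reads are preserved), this does not make them the \emph{same} event: an execution graph may contain distinct events with identical labels, and the constraint $G|_C=G'|_C$ fixes only the relative $\po$-order among committed events, not which abstract event occupies a given $\po$-slot. So the inference ``hence $e$ is the immediate $G.\po$-successor of $p$'' is still a gap, exactly the one you flag as ``the main obstacle'' but do not discharge.
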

\begin{proof}
By contradiction; assume $e$ is a release event. $e$ must have an uncommitted $\po$-predecessor $e'$; otherwise, $e$ would be determined. But since $e$ is
a release event, there is an $\rpo$ edge from $e'$ to $e$. But only determined events are allowed to have outgoing $\rpo$ edges.
\end{proof}

\begin{lemma}
During a Re-Execute step, the order in which the events are added is consistent with happens-before.
\end{lemma}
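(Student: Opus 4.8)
The plan is to exhibit a concrete total order $\prec$ on $G'.E$ and show that it extends $G'.\hb$. Take $\prec$ to be: first the events of the determined set $D$, linearized in some way that extends $G'.\hb$ restricted to $D$ (such a linearization exists because $G'$ is consistent, so $G'.\hb$ is a strict partial order), followed by the events added by the successive Guided Steps, in the order in which they are added. The second block is literally ``the order in which the events are added''; the first block is a bookkeeping prefix that the downstream weak-grounding argument wants anyway. Since $\prec$ is a strict total order, hence transitive, and $G'.\hb = (G'.\po \cup G'.\mathsf{sw})^+$, it suffices to show that every $G'.\po$-edge and every $G'.\mathsf{sw}$-edge is directed consistently with $\prec$.

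First I would record two facts about any Re-Execute step. \textbf{(i)} Every $G'.\rpo$-predecessor is determined: if $(e,e')\in G'.\rpo$ and $e'$ is non-determined, the Re-Execute side condition gives $e\in D$; and if $e'\in D$ then $e\in D$ as well, since $\rpo$ is contained in the transitive closure of $G'.\po$ and $D$ is $G'.\po$-prefix-closed. In particular, by the $\rpo$ clause for release fences, a release fence that program-order-precedes a relaxed-or-stronger write or RMW $w$ is determined. \textbf{(ii)} After each Guided Step the set of events present is $G'.\po$-prefix-closed: this holds of the starting set $D$, and each Guided Step adds an event that is $\po$-maximal in a graph that must remain a well-formed execution of the fixed program, so events of each thread, and each location's initialization event, are added in program order. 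Fact (ii) settles the $\po$ case at once: for $(e,e')\in G'.\po$, if $e'\in D$ then $e\in D$ and $e\prec e'$ by the choice of the $D$-prefix (note $(e,e')\in G'.\hb$); otherwise $e'$ is introduced by some Guided Step, by which time $e$ is already present, so $e\prec e'$.

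For the $\mathsf{sw}$ case, unfold $(e,e')\in G'.\mathsf{sw}$ into a release-side event ($e$ itself, a release write or RMW $w$, or a release fence with a $\po$-later relaxed write or RMW $w$), an acquire-side event ($e'$ itself, an acquire read or RMW $r$, or an acquire fence with a $\po$-earlier relaxed read or RMW $r$), and a chain $w \xrightarrow{\rf} u_1 \xrightarrow{\rf} \cdots \xrightarrow{\rf} u_k \xrightarrow{\rf} r$ in which $u_1,\dots,u_k$ are RMWs ($k=0$ meaning $r$ reads directly from $w$). If $e\in D$ we are done immediately: $e\prec e'$ either because $e'\in D$, whence $e\prec e'$ by the choice of the $D$-prefix and $(e,e')\in G'.\hb$, or because $e'\notin D$, whence $e\prec e'$ by construction. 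So assume $e\notin D$. Then fact (i) rules out the release-fence form, so $e=w$ is a release write or RMW, and Lemma~\ref{lem:non-determined-committed-not-release} (contrapositive) gives $w\notin C$. Since a committed read reads from a committed event, $w\notin C$ propagates along the chain: $u_1\notin C$ (it reads from $w$), then $u_2\notin C$, \dots, $u_k\notin C$, and finally $r\notin C$. Each of $u_1,\dots,u_k,r$ is therefore a read or RMW that is not committed, so when its Guided Step adds it, by the definition of a Guided Step it reads from an event already present; hence $w\prec u_1\prec\cdots\prec u_k\prec r$. Moreover $e'\notin D$ (since $e'\in D$ would force $r\in D\subseteq C$, contradicting $r\notin C$), and therefore $r\preceq e'$ --- with equality if $e'=r$, and $r\prec e'$ by the $\po$ case otherwise. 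Combining, $e=w\prec r\preceq e'$, so $e\prec e'$, as required.

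The step I expect to be the crux is this $\mathsf{sw}$ case, specifically coping with the fact that a committed read may legitimately be added before the event it reads from --- this is exactly the mechanism by which re-execution creates $\porf$ cycles, so $\rf$-edges in general do \emph{not} respect the addition order. The resolution turns on the observation that the $\rf$-chain underlying an $\mathsf{sw}$ edge is ``all or nothing'' with respect to the committed set in a way incompatible with the release end being non-determined: by Lemma~\ref{lem:non-determined-committed-not-release} the release write/RMW is then outside $C$, and ``committed reads read from committed events'' forces the whole chain outside $C$, after which every read in it is added in the ordinary forward direction. A secondary subtlety, handled by the same two facts, is verifying that no later-added event can be $G'.\hb$-before an event of the $D$-prefix, which is what lets us place all of $D$ first.
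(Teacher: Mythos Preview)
Your proof is correct and in fact more explicit than the paper's own argument. Both hinge on Lemma~4 (non-determined committed events are not release), but the routes differ. The paper gives a terse contradiction argument localized at a single Guided Step: if the newly added event $e$ has a happens-before edge to an already-present event, then---since $e$ is $\po$-maximal in the intermediate graph---that edge must begin with an $\mathsf{sw}$ edge, so $e$ is a release write/RMW with an outgoing $\rf$ edge to an existing event; that existing reader, having been added while $e$ was absent, must be committed, whence $e$ is committed, contradicting Lemma~4. The release-fence case is handled implicitly there via $\po$-maximality of $e$ in the intermediate graph (no $\po$-later write exists yet).

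You instead construct the target total order $\prec$ directly---which is precisely what the downstream Lemma~6 needs---and verify $G'.\po$ and $G'.\mathsf{sw}$ edges against it separately. Your treatment is more explicit on two points the paper elides: you dispatch the release-fence shape via the $\rpo$ side-condition of Re-Execute (your fact~(i)), and you walk the full $\rf$-chain underlying a release sequence, propagating ``$\notin C$'' from the release end all the way to the acquire end before invoking the read-from-existing clause of Guided Steps. A pleasant consequence is that your argument works directly with $G'.\hb$ rather than with the $\hb$ of intermediate graphs, which sidesteps any question about whether adding an RMW mid-chain can create new $\mathsf{sw}$ edges between already-present events. Both approaches are sound; yours is more self-contained and packages the result in the form the weak-grounding construction consumes.
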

\begin{proof}
By contradiction. Assume that a Guided Step that adds an event $e$ also adds a happens-before edge from $e$ to some existing event. Since $e$ is $\po$-maximal, it must be that this edge is a synchronizes-with edge, which implies that $e$ is a release event with a reads-from edge to some existing event. This implies $e$ is a committed event. Since it is not a determined event, we obtain a contradiction by Lemma~\ref{lem:non-determined-committed-not-release}.
\end{proof}

\begin{lemma}\label{lem:re-exec-step-weakly-grounded}
If a Re-Execute step goes from $G$ to $G'$ using committed events $C$, and $G$ is grounded, then $G'$ is weakly grounded. For the
grounding order, take some arbitrary order on the determined events consistent with the happens-before order of $G'$, followed by the other events in the order in which they are added by the Guided Steps.
\end{lemma}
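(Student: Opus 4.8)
The plan is to check directly the two requirements of \emph{weakly grounded} for the exhibited grounding order $\preceq$ (the determined events first, in some order consistent with $G'.\hb$, then the remaining events in the order the Guided Steps add them): first that $\preceq$ is consistent with $G'.\hb$, and then that every atomic access event of $G'$ is weakly grounded under $\preceq$.

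For consistency of $\preceq$ with $G'.\hb$: within $D$ the order was chosen $G'.\hb$-consistent by fiat; among the events added by Guided Steps the order of addition is $G'.\hb$-consistent by the already-established lemma that no Guided Step adds an outgoing $\hb$ edge to an event already present (which itself rests on Lemma~\ref{lem:non-determined-committed-not-release}); and since that same lemma also rules out an $\hb$ edge from a Guided-Step event to a determined event, the concatenation of the two parts is $G'.\hb$-consistent. A key consequence is that $D$ is downward closed under $\preceq$, and indeed $G'.\hb$-prefix-closed; combined with $G|_C=G'|_C$ and $D\subseteq C$ one also gets that $D$ is $G.\hb$-prefix-closed and that $\hb$ restricted to $D$ agrees in $G$ and $G'$. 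I would also record two facts used repeatedly: \emph{(a)} since every read or RMW in $C$ reads from an event in $C$ and $G|_C=G'|_C$, every $\rf^+$-predecessor of an event of $C$ again lies in $C$, with the same label as in $G$; and \emph{(b)} by Lemma~\ref{lem:non-determined-committed-not-release}, every release event in $C$ is in fact determined.

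For the second requirement, fix an atomic access $e$ of $G'$ and argue along $\preceq$, so that for the recursive clauses we may assume the $\preceq$-earlier atomic accesses are already known weakly grounded. If $e\in C$, I would establish the first disjunct: since the program performs a $\mathbf{begin\_atomic}(\ell,\cdot)$ before any atomic access of $\ell$ along program order and the construction preserves program order, some $\mathbf{begin\_atomic}(\ell,\cdot)$ event precedes $e$ in $G'.\po$, hence in $G'.\hb$, hence in $\preceq$; and by fact \emph{(a)} together with the hypothesis that $G$ is grounded, $e$ and its $\rf^+$-predecessors — the very same events, with the very same labels, in $G$ and in $G'$ — are enabled under $\ell$'s atomic specification, so $e$ is grounded with respect to the $\preceq$-latest such $\mathbf{begin\_atomic}$ event. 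If instead $e\notin C$, then $e$ was introduced by a Guided Step as a non-committed event, so either $e$ is not a read or RMW (hence a write or fence, giving the first case of the second disjunct's second clause) or $e$ reads from an event added strictly earlier, which is therefore $\preceq$-before $e$ and is a nonatomic access, a committed atomic access (handled by the previous case), or an earlier non-committed atomic access (weakly grounded by the induction along $\preceq$); either way the second clause holds. For the first clause of the second disjunct I must show that every release $\rf^+$-predecessor of every $\preceq$-predecessor $e''$ of $e$ precedes $e$ in $\preceq$: if $e''\in C$ its $\rf^+$-predecessors lie in $C$ by \emph{(a)} and its release $\rf^+$-predecessors are determined by \emph{(b)}, hence precede $e$ because $e\notin D$; if $e''\notin C$, its $\rf$-predecessor was added before $e''$ and one finishes by a nested induction along the addition order.

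The main obstacle I expect is the bookkeeping around $\mathbf{begin\_atomic}$ events in the case $e\in C$: one needs the atomic specification recorded at the $\preceq$-latest preceding $\mathbf{begin\_atomic}(\ell,\cdot)$ event to be the one under which $G$'s groundedness guarantees $e$ and its $\rf^+$-predecessors are enabled. This requires relating the $\mathbf{begin\_atomic}$ structure of $G'$ back to that of $G$ — using that $D$ is $G.\hb$- and $G'.\hb$-prefix-closed, so a determined $e$ retains its $G$-witnessing $\mathbf{begin\_atomic}$ event — together with an argument that two $\mathbf{begin\_atomic}$ events for the same location that are both relevant to an access carry the same specification. Apart from that, the argument is a careful but routine case analysis driven by the Guided Step rules, Lemma~\ref{lem:non-determined-committed-not-release}, and facts \emph{(a)} and \emph{(b)} above.
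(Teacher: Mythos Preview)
Your overall structure is reasonable, but your route for committed events diverges from the paper's and runs into the very obstacle you flag. The paper's entire explicit argument is a single induction on the number of preceding Guided Steps, proving just that for each event $e$ added by a Guided Step, every release $\rf^+$-predecessor of $e$ was added earlier. The two cases of that induction are precisely your facts \emph{(a)} and \emph{(b)}: if $e$ is committed, its $\rf^+$-predecessors lie in $C$ and the release ones are determined by Lemma~\ref{lem:non-determined-committed-not-release}; if $e$ is uncommitted, its $\rf$-predecessor was added earlier and the induction hypothesis finishes. The paper never invokes the first disjunct of the weakly-grounded definition, so the $\mathbf{begin\_atomic}$ bookkeeping you identify as the main obstacle simply does not arise. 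Your treatment of $\hb$-consistency and of the uncommitted case, by contrast, lines up with the paper (which delegates $\hb$-consistency to the preceding lemma and uses the same Guided-Step reasoning for uncommitted reads).

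Your attempt to establish the first disjunct for $e\in C$ is on shaky ground as written. The claim that ``the program performs a $\mathbf{begin\_atomic}(\ell,\cdot)$ before any atomic access of $\ell$ along program order'' is not justified in the YC20 setting: the $\mathbf{begin\_atomic}$ need not be in $e$'s thread, and the $\mathbf{begin\_atomic}$ event that witnesses $e$'s groundedness in $G$ is not guaranteed to survive into $G'$ (only events of $C$ are). So the argument that some $\mathbf{begin\_atomic}(\ell,\cdot)$ precedes $e$ in $G'.\po$, hence in $\preceq$, does not go through, and the specification-matching problem you anticipate is real. The paper's uniform second-disjunct route is both shorter and sidesteps this pitfall entirely; you would do better to drop the committed/uncommitted split for the first disjunct and instead push the release-$\rf^+$-predecessor induction through both cases as the paper does.
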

\begin{proof}
We prove, by induction on the number of preceding Guided Steps, that for the event $e$ added by a Guided Step, all $\rf^+$-predecessors that are release events were added earlier. If $e$ is an uncommitted event with an $\rf$-predecessor $e'$, we have that $e'$ was added earlier, so by the induction hypothesis we have the goal. If $e$ is a committed event, its $\rf^+$-predecessors are also committed events, since the committed set is $\rf$-complete. By Lemma~\ref{lem:non-determined-committed-not-release} we have the goal.
\end{proof}

\subsubsection{Proving one XMM step}

We fix a weakly grounded C20 execution graph $G$ and a corresponding grounding order.

\begin{definition}[Execution Prefix]
We say a subgraph of the execution is an \emph{execution prefix} if it is prefix-closed with respect to happens-before, i.e. if an event is in the prefix, then all events that happen before it are also in the prefix.
\end{definition}

\begin{definition}[Data race]
A data race is a pair of accesses of the same location, not ordered by happens-before, at least one of which is a write and at least one of which is nonatomic. For the purposes of this definition, we treat allocations, deallocations, and conversion to or from atomic mode as nonatomic writes.
\end{definition}

If an execution prefix is data-race-free, then it is well-defined at each access of a location within that prefix whether at that event the location is in nonatomic mode or in atomic mode, and, if it is in atomic mode, what its atomic specification is, based on the $\hb$-maximal conversion event that happens before the access. We say the execution prefix is mode-well-formed if nonatomic accesses occur only on locations in nonatomic mode, and atomic accesses occur only on locations in atomic mode.

We say a configuration $\gamma$ of the operational semantics corresponds to an execution prefix $P$, denoted $P \sim \gamma$, if all of the following hold:
\begin{itemize}
\item the thread pool matches the final configurations of the threads of the prefix
\item the prefix has no data races. It follows that each location's final mode (atomic or nonatomic) and (in the case of nonatomic locations) final value within the prefix is well-defined.
\item the prefix is mode-well-formed
\item the prefix is grounded.
\item the nonatomic heap maps each allocated cell whose final mode is nonatomic to the value written by its final write.
\item the atomic heap maps each allocated cell $\ell$ whose final mode is atomic with atomic specification $\Sigma$ to the atomic specification assigned to it by its latest (i.e.~$\hb$-maximal) $\mathbf{begin\_atomic}$ event $e$, and to the tied resource obtained by, starting from $(\Sigma.\rho_0, \varepsilon)$, first producing the tied postconditions and then consuming the tied preconditions of the atomic access events on $\ell$ in the prefix that happen-after $e$.
\end{itemize}

Notice that the configuration corresponding to an execution prefix $P$, if it exists, is unique. We denote it by $\gamma_P$.

\begin{lemma}
In any reachable configuration of the operational semantics, for each location $\ell$, one of the following holds:
\begin{itemize}
\item Not yet allocated: $h(\ell) = \bot$ and $A(\ell) = \bot$
\item In nonatomic mode: $\exists v.\;h(\ell) = v \lor h(\ell) = \varnothing$ and $A(\ell) = \bot$
\item In atomic mode: $h(\ell) = \varnothing$ and $\exists \Sigma, \omega.\;A(\ell) = (\Sigma, \omega)$
\item Deallocated: $h(\ell) = \varnothing$ and $A(\ell) = \bot$
\end{itemize}
\end{lemma}
\begin{proof}
By induction on the number of steps.
\end{proof}

We say an execution prefix $P$ is \emph{valid} if it corresponds to a configuration $\gamma$ and $\gamma$ is reached by executing the events of $P$ in any order consistent with happens-before.

We define $P_n$ as the prefix constituted by the first $n$ events in grounding order.

\subsubsection{Proof of main lemma}

In this sub-subsection, we prove the main lemma, which says that, for all $n$, all sub-prefixes of $P_n$ are valid. By induction on $n$. We fix an $n$ and we assume all sub-prefixes of all $P_m$ with $m < n$ are valid. The case where $n = 0$ is trivial; assume $n > 0$.

In the following two lemmas, let $e$ be the grounding-order-maximal event in $P_n$, and let $t$ be the thread of $e$.

\begin{lemma}
$P_n$ is data-race-free.
\end{lemma}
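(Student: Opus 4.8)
The plan is to mirror the corresponding C20 lemma (``$P$ is data-race-free'') while using the grounding order to bring in the induction hypothesis. Assume for contradiction that $P_n$ contains a data race: two distinct events $e_1, e_2$ on the same location $\ell$, not ordered by $\hb$, at least one a write or RMW and at least one nonatomic (where, per the definition, allocations, deallocations, and conversions to or from atomic mode count as nonatomic writes). Since the grounding order extends $\hb$, the event $e$ is $\hb$-maximal in $P_n$, so $P_{n-1} = P_n \setminus \{e\}$ is itself an execution prefix and, by the induction hypothesis, is valid and hence data-race-free. The racing pair therefore cannot lie entirely within $P_{n-1}$, so we may assume $e \in \{e_1, e_2\}$, say $e = e_1$; write $t'$ for the thread of $e_2$, and note $t \ne t'$ since same-thread events are always $\hb$-ordered.

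Next I would pass to the prefix $P''$ obtained from $P_n$ by deleting $e$, $e_2$, and every event that $\hb$-follows $e$ or $e_2$. This set is $\hb$-prefix-closed (it is the complement, inside the $\hb$-closed set $P_n$, of a union of $\hb$-up-sets) and is contained in $P_{n-1}$ (since $e \notin P''$), hence valid by the induction hypothesis; let $\gamma_{P''}$ be its corresponding configuration. As $e$ and $e_2$ are $\hb$-incomparable, no $\hb$-predecessor of $e$ equals $e_2$ or $\hb$-follows $e_2$, and symmetrically, so every $\hb$-predecessor of $e$ and of $e_2$ still lies in $P''$. Consequently, in $\gamma_{P''}$ the two distinct threads $t$ and $t'$ are, possibly after a few administrative steps (conditionals, lets, forks, which touch neither $h$ nor $A$), poised to execute the commands corresponding to $e$ and to $e_2$, respectively.

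To finish I would exhibit, starting from $\gamma_{P''}$, a reachable configuration with a stuck thread, contradicting the assumed safety of the program; this case analysis on the operations of $e$ and $e_2$ is the step I expect to be the most laborious, though no individual case is hard. The two ingredients are the location-state invariant of this section (in every reachable configuration, each location is not yet allocated, in nonatomic mode with $h(\ell)$ a value or $\varnothing$ and $A(\ell)=\bot$, in atomic mode with $h(\ell)=\varnothing$ and $A(\ell)\ne\bot$, or deallocated) and the two-step RustBelt-style modeling of nonatomic writes. If both racing events are nonatomic in the extended sense, pick the one that is a write, $b$, and the other, $a$; if $\ell$ is not in nonatomic-value mode in $\gamma_{P''}$ then $a$'s command is already stuck (every nonatomic-access head step on $\ell$ requires $h(\ell)$ to be an ordinary, indeed specific, value), and otherwise I would first take $b$'s opening step, which poisons the cell ($h(\ell):=\varnothing$ for a plain nonatomic write, leaving $h(\ell)\in\{\bot,\varnothing\}$ for $\mathbf{free}$ and $\mathbf{begin\_atomic}$), after which $a$'s head step is stuck. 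If one racing event is atomic and the other nonatomic, the invariant forces $\ell$ in $\gamma_{P''}$ into exactly one of nonatomic mode and atomic mode, so exactly one of the two poised commands fails its heap precondition — an atomic $o(\ell)$ requires $A(\ell)\ne\bot$, a nonatomic access requires $h(\ell)$ to be a value, i.e.\ $A(\ell)=\bot$ — and is stuck. The handful of remaining sub-cases, involving mode-conversion events, is dispatched the same way, appealing to consistency of $G$ to rule out the states that would otherwise escape getting stuck. Everything outside this final case analysis is a routine adaptation of the C20 proof.
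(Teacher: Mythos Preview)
Your proposal is correct and follows essentially the same strategy as the paper: use the outer induction hypothesis to obtain a valid smaller prefix, reach its corresponding opsem configuration, and then derive a stuck state by case analysis on the two racing accesses. The only difference is cosmetic: the paper keeps the earlier racing event $e'$ in the prefix and executes it last (so in $\gamma$, thread $t'$ has just finished $e'$ while $t$ is about to execute $e$), whereas you remove both racing events and have both threads poised before taking the write's opening step to poison the cell---your variant makes the role of the two-step nonatomic-write modelling more explicit, but the underlying argument is identical.
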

\begin{proof}
By contradiction. By the induction hypothesis, we have $P_{n - 1}$ is data-race-free, so there must be some event $e' \in P_{n-1}$ in some thread $t'$
that races with $e$. Consider the prefix $P$ obtained by removing from $P_{n-1}$ all events that happen-after $e'$. Since $P$ is valid,
we have a $\gamma$ such that $P \sim \gamma$ and $\gamma$ is reachable by executing the events of $P$ in any order. Assume $e'$ was executed last. So in $\gamma$, $t'$ has just executed $e'$ and $t$ is about to execute $e$. By case analysis on $e$ and $e'$.
\end{proof}

\begin{lemma}
$P_n$ is grounded.
\end{lemma}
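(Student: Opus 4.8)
The plan is to continue the outer induction on $n$. By the induction hypothesis $P_{n-1}$ is valid, hence grounded and mode-well-formed, and by the preceding lemma $P_n$ is data-race-free. The atomic access events of $P_n$ are those of $P_{n-1}$ together with at most the grounding-order-maximal event $e$, and those of $P_{n-1}$ are already grounded, so it suffices to show that $e$ is grounded, assuming $e$ is an atomic access on a location $\ell$ with label $(t,\ell,v,o)$. First I would pin down the $\mathbf{begin\_atomic}$ event: since $\mathbf{begin\_atomic}(\ell,\cdot)$ events are treated as nonatomic writes on $\ell$ and $P_n$ is data-race-free, any two of them lying in $P_n$ are $\hb$-ordered, each is $\hb$-ordered with $e$, and — the grounding order refining $\hb$ — none is $\hb$-after $e$; hence as soon as some $\mathbf{begin\_atomic}(\ell,\cdot)$ event $\hb$-precedes $e$, there is a unique $\hb$-maximal one, $e_0$, carrying some atomic specification $\Sigma$, and it is also the grounding-order-latest $\mathbf{begin\_atomic}(\ell,\cdot)$ event before $e$. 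It then remains to show that $e$ and $e$'s $\rf^+$-predecessors are enabled under $\Sigma$; note the backward $\rf$-chain from $e$ is unique and stays on $\ell$, bottoming out at the initializing write of $\ell$'s atomic region.

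Next I would use that $G$ is weakly grounded, so $e$ is weakly grounded under the fixed grounding order, and case-split. In case (a), weak-groundedness states outright that $e$ is grounded with respect to the grounding-order-latest preceding $\mathbf{begin\_atomic}(\ell,\cdot)$ event, which we have identified with $e_0$, so $e$ and its $\rf^+$-predecessors are enabled under $\Sigma$ and $e$ is grounded. Case (b) is the crux, and here even the existence of $e_0$ needs work. If $e$ is a read or RMW, clause (b)(ii) gives that $e$'s $\rf$-predecessor $w$ on $\ell$ exists and precedes $e$ in grounding order, so $w\in P_{n-1}$; if $w$ is atomic then by mode-well-formedness of $P_{n-1}$ it has a $\mathbf{begin\_atomic}$ event, which a short coherence argument — an intervening $\mathbf{end\_atomic}$/re-$\mathbf{begin\_atomic}$ write would be $\mo$-above $w$ and $\hb$-below $e$, contradicting consistency of $\mathsf{eco}$ with $\hb$ — together with data-race-freedom places $\hb$-before $e$; if $w$ is nonatomic, the same coherence argument forces $w$ to be the initializing write, part of a $\mathbf{begin\_atomic}(\ell,\Sigma)$ event $\hb$-before $e$. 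If $e$ is an atomic write or fence, $e_0$ comes from the operational-semantics correspondence: $\gamma_{P_{n-1}}$ is reachable and the program is safe, so thread $t$'s command $K[o(\ell)]$ is not stuck there, hence $o(\ell)$ can step, which requires $\ell$ to be in atomic mode in $\gamma_{P_{n-1}}$, i.e.\ a $\mathbf{begin\_atomic}(\ell,\cdot)$ event in $P_{n-1}$, which data-race-freedom of $P_n$ places $\hb$-before $e$.

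With $e_0$ and $\Sigma$ fixed, in case (b) I would get enabledness of $e$'s $\rf^+$-predecessors from the fact that $e$'s $\rf$-predecessor $w$ lies in $P_{n-1}$ (clause (b)(ii)): $P_{n-1}$ being grounded makes $w$ and all of $w$'s $\rf^+$-predecessors enabled, and the coherence bookkeeping above makes them all accesses of $\ell$ in the region under $\Sigma$, so every $\rf^+$-predecessor of $e$ is enabled under $\Sigma$. For $e$ itself I would invoke \emph{sufficiency} of $\Sigma$'s tied preconditions. Take the atomic location trace whose $E_\mathsf{at}$ is $e$ together with the enabled $\ell$-accesses, and take $E_\mathsf{ex}$ to be the $\ell$-accesses in $P_{n-1}$ that happen after $e_0$: this $E_\mathsf{ex}$ omits $e$ and everything $\hb$-after $e$, contains every $\hb$-predecessor of $e$, and — by clause (b)(i), which forces every release $\rf^+$-predecessor of $e$ before $e$ in grounding order — contains every release event of $E_\mathsf{at}\setminus\{e\}$; and by $P_{n-1}\sim\gamma_{P_{n-1}}$ and cancellativity, running through $E_\mathsf{ex}$ in any $\hb$-consistent order reaches exactly the total tied resource stored for $\ell$ in $\gamma_{P_{n-1}}$, in which the tied precondition $(o,(\rho,\theta))\in\Sigma.\mathsf{pre}$ is available (else $K[o(\ell)]$ would be stuck in the reachable $\gamma_{P_{n-1}}$). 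This yields $\Sigma,t,v,o\vDash_\mathsf{g}(\rho,\varepsilon[t:=\theta])\cdot\omega$, and sufficiency gives $(o,v)\in\mathrm{dom}\,\mathsf{post}$, i.e.\ $e$ is enabled, completing the argument. I expect the main obstacle to be precisely this assembly — choosing $E_\mathsf{at}$ and $E_\mathsf{ex}$ so that the three requirements of $\vDash_\mathsf{g}$ (all $\hb$-predecessors of $e$ in, all release events of $E_\mathsf{at}$ in, nothing $\hb$-after $e$ in) all follow from facts already established about $P_{n-1}$ — together with the coherence bookkeeping needed to identify $e_0$ and $\Sigma$ when $\mathbf{end\_atomic}$ and re-$\mathbf{begin\_atomic}$ are possible.
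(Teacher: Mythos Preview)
Your proposal is correct and follows essentially the same route as the paper: reduce to showing $e$ is grounded, use that $\gamma_{P_{n-1}}$ is not stuck to obtain the atomic specification $\Sigma$ and the tied precondition, argue that $e$'s $\rf^+$-predecessors are enabled via the $\rf$-predecessor lying in the already-grounded $P_{n-1}$, and then invoke sufficiency with $E_\mathsf{ex}$ taken to be the atomic $\ell$-accesses in $P_{n-1}$ that $\hb$-follow the relevant $\mathbf{begin\_atomic}$ event.

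The differences are matters of presentation rather than substance. You make the case split on the two disjuncts of weak groundedness explicit and carefully argue (via data-race-freedom and grounding-order refining $\hb$) that the grounding-order-latest $\mathbf{begin\_atomic}$ coincides with the $\hb$-maximal one; the paper leaves this implicit. You also split the existence of $e_0$ into a coherence argument for reads/RMWs and a not-stuck argument for writes/fences, whereas the paper uses the not-stuck argument uniformly (which would work for your read/RMW case too and would shorten your proof). Conversely, the paper is a bit more concrete about $E_\mathsf{at}$, taking it to be $e$ together with the $\rf^+$-predecessors of $e$ and the $\rf^*$-predecessors of the events in $E_\mathsf{ex}$; your phrase ``$e$ together with the enabled $\ell$-accesses'' should be sharpened to something of this shape so that the atomic-location-trace condition (every read/RMW in $E_\mathsf{at}$ reads from $E_\mathsf{at}\cup\{\mathit{init}\}$) and the release-events-in-$E_\mathsf{ex}$ condition are both visibly satisfied.
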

\begin{proof}
By the induction hypothesis, we have that $P_{n - 1}$ is grounded. It remains to prove that $e$ is grounded. Assume it is weakly grounded. By $\gamma_{P_{n-1}}$ not being stuck, we have that $e$'s operation $o$ is enabled and that the current mode of $e$'s location $\ell$ is atomic, with some atomic specification $\Sigma$.
\begin{itemize}
\item We prove that $e$'s $\rf^+$ predecessors are enabled under $\Sigma$. $e$'s immediate $\rf$-predecessor $e'$, if any, is in $P_{n-1}$ and is therefore grounded; by groundedness of $e'$ the goal follows for $e$'s indirect $\rf^+$-predecessors.
\item We exploit sufficiency of $e$'s tied precondition. Let $E_\mathsf{ex}$ be the set of all atomic access events on $\ell$ in $P_{n - 1}$ that happen-after the $\hb$-maximal $\mathbf{begin\_atomic}$ event on $\ell$ in $P_n$. We construct an atomic location trace with $E_\mathsf{at}$ consisting of $e$, all $\rf^+$ predecessors of $e$, and all $\rf^*$ predecessors of the events in $E_\mathsf{ex}$. By sufficiency, we have that $e$ is enabled in $\Sigma$.
\end{itemize}
\end{proof}

\begin{lemma}\label{lem:yc20-main}
For every sub-prefix $P$ of $P_n$, and for every order on the events of $P$ consistent with $\hb$, a configuration $\gamma$ such that $P \sim \gamma$ is reachable by executing the events in this order.
\end{lemma}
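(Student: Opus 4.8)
The plan is to follow the skeleton of the C20 soundness argument (Appendix~\ref{app:c20-soundness}), with one new ingredient: because a sub-prefix $P$ of $P_n$ need not be a sub-prefix of any $P_m$ with $m < n$, I would prove the statement by an inner induction on the number of events of $P$. The empty case is immediate. For the inductive step, fix an $\hb$-consistent order of the events of $P$ and let $e$, in thread $t$, be its last event; then $e$ is $\hb$-maximal in $P$, so $P' = P \setminus \{e\}$ is again a sub-prefix of $P_n$ with one fewer event, to which either the inner induction hypothesis or --- when $P'$ happens to be a sub-prefix of $P_{n-1}$ --- the outer one applies, yielding the unique $\gamma'$ with $P' \sim \gamma'$, reached by executing the events of $P'$ in the induced order. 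It then remains to show that performing $e$ from $\gamma'$ is possible and reaches $\gamma_P$, the unique configuration with $P \sim \gamma_P$. Since the program is safe and $\gamma'$ is reachable, the pending command of thread $t$ in $\gamma'$ --- namely the command of $e$ --- is not stuck, a fact that drives every case.

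For allocation, deallocation, nonatomic reads and writes, $\mathbf{begin\_atomic}$/$\mathbf{end\_atomic}$, the control-flow constructs, and $\mathbf{fork}$, I would argue exactly as in the C20 proof: data-race-freedom of $P_n$ (hence of $P$), the four-case invariant on the state of each location, and non-stuckness of $\gamma'$ together pin down that $e$'s head step(s) are enabled and that the resulting configuration still satisfies every clause of $\sim$ --- in particular the grounding clause because $P$, like $P_n$, is grounded, and the atomic-heap clause because these events touch no atomic cell except in the $\mathbf{begin\_atomic}$/$\mathbf{end\_atomic}$ cases, which the state invariant handles as in Appendix~\ref{app:c20-soundness}.

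The substantive case is $e$ an atomic operation $o$ on a location $\ell$, with $\mathsf{lab}(e) = (t,\ell,v,o)$. Non-stuckness and the shape of \textsc{AtomicOp}/\textsc{AtomicOp-Stutter} force $\ell$ to be in atomic mode in $\gamma'$ with specification $\Sigma$ and $\gamma'.A(\ell) = (\Sigma, \omega)$ with $\omega$ decomposing so as to expose $e$'s tied precondition $(o,(\rho,\theta)) \in \Sigma.\mathsf{pre}$; and since $P_n$ is grounded, $e$ is enabled, i.e.~$(o,v) \in \mathrm{dom}\,\Sigma.\mathsf{post}$. Thus, once $\Sigma, t, v, o \vDash \omega$ is in hand, \textsc{AtomicOp} fires with the actual result value $v$ of $e$, and the resulting atomic heap is the one prescribed by $P \sim \gamma_P$ (the ``produce all postconditions, then consume all preconditions'' recipe used in $\sim$ and the per-event consume-then-produce of \textsc{AtomicOp} agree by commutativity and cancellativity of the tied-resource monoids, as long as the computation never gets stuck). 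To witness $\Sigma, t, v, o \vDash \omega$ I would take as $\mathit{init}$ the $\hb$-maximal $\mathbf{begin\_atomic}(\ell,\Sigma)$ event that $\hb$-precedes $e$, let $E_\mathsf{ex}$ be the atomic accesses of $\ell$ in $P'$ that happen-after $\mathit{init}$ (exactly the events whose tied conditions determine $\omega$), and let $E_\mathsf{at} = \{e\} \cup E_\mathsf{ex}$ together with all $\rf^+$-predecessors on $\ell$ of these events. Every event of $E_\mathsf{at}$ is enabled: $e$ and its $\rf^+$-predecessors because $e$ is grounded, and $E_\mathsf{ex}$ and their $\rf^+$-predecessors because they lie in the grounded $P_n$. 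Coherence and RMW-atomicity of $G$ and mode-well-formedness of $P$ then give that each read in $E_\mathsf{at}$ reads from an event in $E_\mathsf{at} \cup \{\mathit{init}\}$, that $\mathit{init}$ happens-before all of $E_\mathsf{at}$, and that every event of $E_\mathsf{at}$ that happens-before $e$ already lies in $E_\mathsf{ex}$; and no event of $E_\mathsf{ex}$ is $e$ or happens-after $e$, since $E_\mathsf{ex} \subseteq P'$ and $e$ is $\hb$-maximal in $P$. Finally, for each $\hb$-consistent total order $\pi$ of $E_\mathsf{ex}$, extend it to an $\hb$-consistent order of all of $P'$ performing the atomic $\ell$-accesses in order $\pi$; by the inner induction hypothesis the corresponding opsem run reaches $\gamma'$ without getting stuck, so along it the tied resource of $\ell$ is obtained by consuming each precondition and producing each postcondition in order $\pi$, ending at $\omega$ --- precisely what $\vDash$ requires.

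The hard part will be exactly this last case: establishing $\Sigma, t, v, o \vDash \omega$, and in particular reconciling the three conditions on $E_\mathsf{ex}$ in the definition of $\vDash$ with its ``for every $\hb$-consistent order'' quantifier. One must keep the reads-from closure inside the current atomic region --- so that $E_\mathsf{at}$ genuinely is a fully enabled atomic location trace and $E_\mathsf{ex}$ contains neither too few nor too many events --- while showing that no $\hb$-consistent linearization of $E_\mathsf{ex}$ ever drives a tied resource below a needed precondition; this is where groundedness of $P_n$ and the inner induction hypothesis really pull their weight, and where the subtleties of multiple atomic regions (once $\mathbf{end\_atomic}$ is allowed) and the interaction of $\rf$-chains with coherence will need the most care.
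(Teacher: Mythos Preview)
Your proposal is correct and follows essentially the same approach as the paper: an inner induction on the size of $P$, removing the $\hb$-last event $e$, applying the induction hypothesis to $P' = P \setminus \{e\}$, and in the atomic-operation case witnessing $\Sigma, t, v, o \vDash \omega$ with the same choice of $E_\mathsf{ex}$ (the atomic $\ell$-accesses in $P'$ after the $\hb$-maximal $\mathbf{begin\_atomic}$) and $E_\mathsf{at}$ (the $\rf^+$-closure of $E_\mathsf{ex} \cup \{e\}$). You spell out more of the details the paper elides---notably the use of the inner induction hypothesis on an extension of each $\hb$-consistent order $\pi$ of $E_\mathsf{ex}$ to discharge the universal quantifier in the definition of $\vDash$---but the skeleton and the key witnesses coincide.
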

\begin{proof}
By induction on the size of $P$. Fix a $P$ and fix such an order. Let $e$, in thread $t$, be the final event in this order. Apply the induction hypothesis to $P' = P \setminus \{e\}$ to obtain some $\gamma'$ such that $P' \sim \gamma'$. It suffices to prove that executing $e$ in $\gamma'$ reaches a configuration $\gamma$ such that $P \sim \gamma$. By case analysis on $e$. We elaborate one case.
\begin{itemize}
\item \textbf{Case} $e$ is an atomic operation. Assume $\mathsf{lab}(e) = (t, \ell, v, o)$. By the fact that $\gamma'$ is reachable and, by the safety of the program, therefore not stuck, we know $\gamma'.A(\ell) = (\Sigma, \omega)$ for some $\Sigma$ and $\omega$, and we can, starting from $\gamma'$, consume $e$'s tied precondition and produce its tied postcondition to obtain $\gamma$. It remains to prove that $\Sigma, t, v, o \vDash \omega$. Let $E_\mathsf{ex}$ be the set of atomic accesses on $\ell$ in $P'$ that happen-after the $\hb$-maximal $\mathbf{begin\_atomic}(\ell, \_)$ event $e'$ in $P'$. Take as the set $E_\mathsf{at}$ of the atomic location trace the $\rf^{-1}$ closure of $E_\mathsf{ex} \cup \{e\}$, i.e.~the events of $E_\mathsf{ex} \cup \{e\}$ as well as all $\rf^+$-predecessors of those events.
\end{itemize}
\end{proof}

\subsubsection{Proving an YMM trace}

\begin{theorem}
The program is grounded and data-race-free under YC20.
\end{theorem}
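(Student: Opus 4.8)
The plan is to prove the statement by induction on the number of YMM construction steps, with the induction hypothesis that every consistent C20 execution graph reached so far is grounded and data-race-free. By definition a YC20 behavior is the behavior of some consistent C20 execution graph built from the empty graph by finitely many Execute and Re-Execute steps, so establishing this for every reachable graph suffices: data-race-freedom is the form of undefined behavior we track (the others being handled analogously), and by the redundant-read-write-pair lemma at the beginning of Appendix~\ref{app:yc20-soundness}, data-race-freedom of the instrumented program---in which $\mathbf{begin\_atomic}$ and $\mathbf{end\_atomic}$ are treated as redundant nonatomic read-write pairs---transfers to the original program. The base case is immediate, since the empty graph has no events and is vacuously grounded and data-race-free.

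For the inductive step, let $G'$ be obtained from a grounded, data-race-free $G$ by one YMM step. If it is an Execute step, Lemma~\ref{lem:exec-step-weakly-grounded} makes $G'$ weakly grounded, with grounding order any $\hb$-consistent enumeration of $G$'s events followed by the new event. If it is a Re-Execute step, Lemma~\ref{lem:re-exec-step-weakly-grounded} makes $G'$ weakly grounded, with grounding order an $\hb$-consistent enumeration of the determined events followed by the remaining events in the order the Guided Steps add them; that this enumeration is $\hb$-consistent is the earlier lemma that Guided Steps add events consistently with $\hb$, itself resting on Lemma~\ref{lem:non-determined-committed-not-release}. Either way, fix $G'$ together with this grounding order so that the preceding development applies. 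The main lemma established there---proved by induction on $n$ from the facts that each $P_n$ is data-race-free, that each $P_n$ is grounded, and Lemma~\ref{lem:yc20-main}---shows every sub-prefix of every $P_n$ valid, hence in particular every $P_n$ data-race-free and grounded. Taking $n$ to be the number of events of $G'$ gives $P_n = G'$, so $G'$ is data-race-free and grounded, closing the induction.

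I expect the real difficulty to lie not in this assembly but in the ingredient it leans on most, the lemma that each $P_n$ is grounded, which upgrades the weak groundedness produced by the YMM-step lemmas to full groundedness. Weak groundedness promises only that each atomic access is grounded with respect to some $\mathbf{begin\_atomic}$ event before it in grounding order, or else merely reads from an earlier weakly-grounded-or-nonatomic event while having the relevant release $\rf^+$-predecessors placed early; converting this into ``$e$ is grounded with respect to the $\hb$-maximal $\mathbf{begin\_atomic}$ event happening-before it'' is where sufficiency of the tied precondition must be exploited, against non-stuckness of $\gamma_{P_{n-1}}$, and the delicate move is the choice of witnessing atomic location trace---taking $E_\mathsf{ex}$ to be the atomic accesses of $e$'s location in $P_{n-1}$ that happen-after the relevant $\mathbf{begin\_atomic}$ event and $E_\mathsf{at}$ to be the $\rf^{-1}$-closure of $E_\mathsf{ex}\cup\{e\}$, then checking that this really is an atomic location trace witnessing $e$'s grounding-consistency. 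Once that lemma, the data-race case analysis, and Lemma~\ref{lem:yc20-main} are in hand, what remains---verifying that the grounding orders handed off by the two kinds of YMM step are globally $\hb$-consistent, instantiating the induction on $n$ at the full graph, and bridging back to the original program---is routine.
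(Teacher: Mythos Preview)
Your proposal is correct and follows essentially the same approach as the paper: induction on the number of YMM steps, with the base case trivial and the inductive step using Lemmas~\ref{lem:exec-step-weakly-grounded} and~\ref{lem:re-exec-step-weakly-grounded} to obtain weak groundedness of $G'$, then invoking the main-lemma development (culminating in Lemma~\ref{lem:yc20-main}) to upgrade this to full groundedness and data-race-freedom at $P_n = G'$. Your additional commentary on where the real work lies---in the lemma that each $P_n$ is grounded, via sufficiency of tied preconditions against non-stuckness of $\gamma_{P_{n-1}}$---accurately identifies the load-bearing ingredient.
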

\begin{proof}
By induction on the number of YMM steps, i.e.~the number of (Re-)Execute steps. Base case: the empty execution is grounded and data-race-free. Induction step: assume $G$ is grounded and data-race-free, and assume the step goes from $G$ to $G'$. From Lemmas~\ref{lem:exec-step-weakly-grounded} and \ref{lem:re-exec-step-weakly-grounded} we know $G'$ is weakly grounded. By Lemma~\ref{lem:yc20-main}, we obtain that $G'$ is grounded and data-race-free.
\end{proof}

\end{document}